\documentclass[11pt]{article}
\usepackage{geometry}
\usepackage{amsmath,amsfonts}
\usepackage[dvipsnames]{xcolor}
\usepackage{empheq,tensor,cancel,tcolorbox,blkarray,authblk,cite}
\usepackage[colorlinks=true, linkcolor=blue, citecolor=blue, linktoc=all]{hyperref}
\usepackage{bm}
\usepackage{tabularx}
\usepackage{ltablex}
\usepackage{subfigure}

\usepackage{tikz}
\usepackage{cleveref}
\usepackage{supertabular}
\usepackage{todonotes}
\usepackage{mathrsfs}
\usepackage{array}
\usepackage{lipsum}
\usepackage{physics}
\usepackage{tikz-cd}
\usepackage{enumerate}
\usepackage{pifont}
\usepackage{float} 
\usepackage{helvet} 
\usepackage{amsthm} 
\usepackage{amssymb} 
\usepackage{booktabs} 
\usepackage{spectralsequences} 
\usepackage{enumitem}
\setlist{leftmargin=\parindent}

\newcommand{\newreptheorem}[2]{%
\newenvironment{rep#1}[1]{%
 \def\rep@title{#2 \ref{##1}}%
 \begin{rep@theorem}}%
 {\end{rep@theorem}}}
\makeatother
\newtheorem{lemma}{Lemma}[section]
\newreptheorem{lemma}{Lemma}
\newtheorem{theorem}[lemma]{Theorem}

\newreptheorem{conj}{Conjecture}

\usepackage{lscape}
\usepackage{braket}

\usepackage[T1]{fontenc} 

\newcommand{\im}{\mathrm{Im}\,}

\newcommand{\I}{\mathrm{i}}
\newcommand{\E}{\mathrm{e}}

\newcommand{\Z}{{\mathbb{Z}}} 
\newcommand{\R}{{\mathbb{R}}}
\newcommand{\bC}{{\mathbb{C}}}

\newcommand{\td}{{\rm d}}

\newcommand{\be}{\begin{equation}}
\newcommand{\ee}{\end{equation}}
\newcommand{\ba}{\begin{aligned}}
\newcommand{\ea}{\end{aligned}}
\newcommand{\bea}{\begin{eqnarray}}
\newcommand{\eea}{\end{eqnarray}}

\def\mb{\mathbb}

\def\mc{\mathcal}
\def\bp{\begin{pmatrix}}
\def\ep{\end{pmatrix}}

\def\im{\mathrm{im}}
\def\ker{\mathrm{ker}}

\allowdisplaybreaks[1]

\title{\huge On Quantum Aspects of 1-Form Symmetries II: Bordism, Invertible Phases, and Anomalies}

\author[1]{Weizhen Jia\thanks{weizhenjia@cuhk.edu.hk}}
\author[2,3]{Yi-Nan Wang\thanks{ynwang@pku.edu.cn}}
\author[4]{Yi Zhang\thanks{yi.zhang@ipmu.jp}}

\affil[1]{Department of Physics, The Chinese University of Hong Kong, Sha Tin, Hong Kong, China}
\affil[2]{School of Physics, Peking University, Beijing, China, 100871}
\affil[3]{Center for High Energy Physics, Peking University, Beijing, China, 100871}
\affil[4]{Kavli IPMU (WPI), UTIAS, The University of Tokyo, Kashiwa, Chiba 277-8583, Japan}

\date{}

\begin{document}
\maketitle

\begin{abstract}
We study quantum anomalies associated with $U(1)$ 1-form symmetries from the perspective of invertible phases and bordism. We compute the oriented and spin bordism groups of the Eilenberg--Mac Lane space $K(\Z,3)$ up to degree 8 using the Atiyah--Hirzebruch spectral sequence, resolving the relevant extension problems by geometric arguments and identifying both bordism invariants and geometric generators. We then relate these invariants to perturbative and global anomalies, and discuss physical examples and top-down constructions of the corresponding anomaly terms. For 5-dimensional theories, we find a new mixed perturbative anomaly between the $U(1)$ 1-form symmetry and spacetime diffeomorphisms, while for 7-dimensional theories we find a new $\Z_2$-valued discrete anomaly intrinsic to the $U(1)$ 1-form symmetry. We also discuss their boundary realizations and give new physical interpretations of these anomalies.
\end{abstract}
~~\\
\begin{center}
\it --- Dedicated to the memory of Professor Rob Leigh ---
\end{center}

\newpage
\begingroup
\hypersetup{linkcolor=black}
\tableofcontents
\endgroup

\newpage

\section{Introduction and Summary}

Global symmetries are among the most useful structures in quantum field theory. In recent years, the notion of symmetry has been substantially generalized: instead of acting only on local operators, a generalized global symmetry may act on extended operators. A $p$-form global symmetry acts on $p$-dimensional charged objects and couples to a background $(p+1)$-form gauge field~\cite{Gaiotto:2014kfa}. Such symmetries have played an important role in the study of phases of gauge theories, confinement, dualities, higher-group structures, and quantum anomalies; see, for example, recent lecture notes and reviews~\cite{Sharpe:2015mja,Gomes:2023ahz,Schafer-Nameki:2023jdn,Brennan:2023mmt,Bhardwaj:2023kri,Luo:2023ive,Iqbal:2024pee,Costa:2024wks,Kaidi:2026urc}.

In this work we study anomalies associated with $U(1)$ 1-form symmetries. The charged objects are of dimension one, and the corresponding background gauge field is a $U(1)$-gerbe connection, or equivalently a 2-form $U(1)$ gauge field. Topologically, such backgrounds are classified by the Eilenberg--Mac Lane space
\begin{equation}
    B^2U(1)=K(\Z,3)\,.
\end{equation}
Compared with ordinary 0-form symmetries, and also with finite higher-form symmetries, continuous $U(1)$ 1-form symmetries have been less systematically studied from the bordism and invertible field theory perspective. Our goal is to analyze their perturbative and global anomalies using this framework.

The modern viewpoint~\cite{Freed:2014iua} is that an anomaly of a $d$-dimensional QFT is described by an invertible field theory~\cite{Freed:2012bs} in one higher dimension. Here \emph{invertible} means invertible under stacking. On a closed $(d+1)$-manifold, its partition function is a complex number of absolute value one, giving the anomalous phase factor. This is the modern formulation of anomaly inflow, whose classic realization is the Callan--Harvey mechanism~\cite{Callan:1984sa}. Following common terminology, we will use the terms anomaly theory and invertible phase interchangeably. In the condensed matter context, invertible phases protected by a specified symmetry are often called symmetry protected topological (SPT) phases~\cite{Gu:2009dr,Chen:2011pg,Chen:2012ctz,Kapustin:2014tfa,Kitaev:2013}: they become trivial after forgetting the protecting symmetry, while their anomalous boundaries realize the same anomaly inflow picture.

Generally speaking, the anomaly theory need not be topological: it may also depend on geometric data, such as metrics and background connections. The full geometric object contains more information than its deformation class.\footnote{A precise classification of invertible phases requires a differential refinement of the Anderson dual~\cite{Yamashita:2021cao}.} In this paper we focus on these deformation classes, equivalently on the anomaly data at the bordism level. In the Freed--Hopkins framework \cite{Freed:2016rqq}, these classes are captured by the Anderson dual of the corresponding bordism theory. Thus, for a spacetime structure $\mathcal S$ and a target space $X$ classifying the background fields, the relevant group is
\begin{equation}
    (I_{\Z}\Omega^{\mathcal S})^{d+2}(X)\,,
\end{equation}
which is isomorphic (not canonically) to
\begin{equation}
{\rm Hom}_{\Z}\!\left({\rm Tor}\,\Omega^{\mathcal S}_{d+1}(X),\R/\Z\right) \oplus     \mathrm{Hom}_{\Z}\!\left({\rm Free}\,\Omega^{\mathcal S}_{d+2}(X),\Z\right)\,.
\end{equation}
The shift by two is the standard degree convention for the Anderson dual of the bordism spectrum. The free part is represented by local anomaly polynomials, which can be described by the BRST cohomology and descent formalism. In \cite{PartI} we used the geometric viewpoint of~\cite{Jia:2023tki} and extended it to gerbe backgrounds for $U(1)$ 1-form symmetries. The torsion part is invisible to local anomaly polynomials and gives global, and hence non-perturbative, anomalies. Bordism groups therefore organize both perturbative and global anomalies in a single framework. For $U(1)$ 1-form symmetries, we take $X=K(\Z,3)$ and compute the oriented and spin bordism groups
\begin{equation}
    \Omega^{\rm SO}_\bullet(K(\Z,3))\qquad\text{and}
    \qquad
    \Omega^{\rm Spin}_\bullet(K(\Z,3))
\end{equation}
up to degree 8. Physically, they correspond to bosonic and fermionic theories without time-reversal symmetry, respectively. The results are summarized in Table~\ref{tab:bordism-KZ3-summary}.
\begin{table}[H]
\centering
    \begin{tabular}{c|ccccccccc}
  $n$ & $0$ & $1$ & $2$ & $3$ & $4$ & $5$ & $6$ & $7$ & $8$  \\ \midrule
$\Omega^{\rm SO}_n(K(\Z,3))$
& $\Z$ & $0$ & $0$ & $\Z$ & $\Z$ & $\Z_2^2$ & $0$ & $\Z$ & $\Z_2^2\oplus \Z^2$   \\  \midrule
$\Omega^{\rm Spin}_n(K(\Z,3))$
& $\Z$ & $\Z_2$ & $\Z_2$ & $\Z$ & $\Z$ & $0$ & $0$ & $\Z$ & $\Z_2\oplus \Z^2$
\end{tabular}
\caption{Results for $\Omega^{{\rm SO}/\rm Spin}_{\bullet\leqslant 8}(K(\Z,3))$.}
\label{tab:bordism-KZ3-summary}
\end{table}
The spin bordism groups agree with the recent computation of~\cite{Joyce:2025jyd} (see also~\cite{Dierigl:2026sok}). We compute the oriented bordism groups using the Atiyah--Hirzebruch spectral sequence (AHSS) together with geometric arguments that resolve the relevant extension problems. In particular, the degree-7 oriented bordism group involves a nontrivial extension, which we solve using an $S^3$-bundle construction related to Milnor's construction of exotic spheres.

With the bordism results at hand, one obtains $(I_{\Z}\Omega^{\mathcal S})^{d+2}(K(\Z,3))$ for $d\leqslant 7$, which we present in Table~\ref{tab:invphase-KZ3-intro}.
\begin{table}[H]
\centering
    \begin{tabular}{c|cccccccccc}
  $n= d+2$ & $0$ & $1$ & $2$ & $3$ & $4$ & $5$ & $6$ & $7$ & $8$ & $9$ \\ \midrule
$(I_{\Z} \Omega^{\rm SO})^{n}(K(\Z,3))$ & $\Z$ & $0$ & $0$ & $\Z$ & $\Z$ & $0$ & $\Z_2^2$ & $ \Z$ & $   \Z^2 $ & $  \Z_2^2  $  \\  \midrule
$(I_{\Z} \Omega^{\rm Spin})^{n}(K(\Z,3))$ & $\Z$ & $0$ & $\Z_2$ & $\Z_2\oplus\Z$ & $\Z$ & $0$ & $0$ & $\Z$ & $  \Z^2$ & $  \Z_2  $
\end{tabular}
\caption{Invertible phases as Anderson duals of the bordism groups.}
\label{tab:invphase-KZ3-intro}
\end{table}

The low-dimensional entries reproduce familiar structures: purely gravitational phases, fermion-parity/spin-structure effects, and known mixed anomalies of Maxwell theory. The new content of this work starts in higher dimensions. As discussed in \cite{PartI}, for $5d$ theories with a $U(1)$ 1-form symmetry, there is a mixed perturbative anomaly between the 1-form symmetry and spacetime diffeomorphism. In the oriented case the anomaly polynomial is generated by
\begin{equation}
    H_3\wedge p_1\,,
\end{equation}
where $H_3$ is the field strength of the background 2-form gauge field. In the spin case the corresponding normalization is $\frac{1}{4}  H_3\wedge p_1$. In this work, we further investigate the physical effect of this anomaly and show that magnetic strings in such a phase carry additional topological data associated with trivializations of the relevant characteristic class near the string worldsheet.

For 7-dimensional theories, we find a new $\Z_2$-valued discrete anomaly intrinsic to the continuous $U(1)$ 1-form symmetry,
\begin{equation}
    u\smile \mathrm{Sq}^2 u\,,
\end{equation}
where $u$ is the mod-$2$ reduction of the universal degree-3 class on $K(\Z,3)$. On oriented non-spin manifolds there is an additional mixed anomaly
\begin{equation}
    u\smile w_2\smile w_3\,.
\end{equation}
We discuss possible boundary interpretations of these anomalies, their relation to anomaly interplay, and their restriction to finite subgroups of $U(1)$.

We also discuss examples involving both electric and magnetic higher-form symmetries in Maxwell-type theories. In $4d$, the bordism framework reproduces the known phases of Maxwell theory associated with its electric and magnetic $U(1)$ 1-form symmetries. This includes, in particular, the mixed electric-magnetic anomaly derived in \cite{PartI} by the BV--BRST descent, with anomaly polynomial $I_6\sim H_E\wedge H_M$, which can be viewed as an ABJ anomaly after gauging one of the two $U(1)$ 1-form symmetries. In $5d$ and $7d$ Maxwell theories, including the magnetic dual symmetry leads to additional mixed anomalies involving the corresponding higher-form background fields.

Finally, we describe top-down constructions of some of these anomaly terms from string theory. In particular, we study reductions of type IIA topological couplings and explain how terms of the form $H_3\wedge p_1$ and $u\,\mathrm{Sq}^2 u$ can arise from higher-dimensional topological actions.

The paper is organized as follows. In Section~\ref{sec:generalities}, we review the traditional and modern viewpoints on anomalies and explain the bordism framework used in the rest of the paper. In Section~\ref{sec:KZ3}, we compute the relevant oriented and spin bordism groups of $K(\Z,3)$. In Section~\ref{sec:physics}, we discuss physical examples of anomaly theories for continuous $U(1)$ 1-form symmetries. In Section~\ref{sec:topdown}, we describe top-down constructions of the corresponding anomalies. We conclude in Section~\ref{sec:conclusion}. The appendices collect the bordism computations for products of Eilenberg--Mac Lane spaces used in the main text.

\section{Generalities about Anomalies in QFT}
\label{sec:generalities}
Traditionally, anomalies in quantum field theory are viewed as obstructions to preserving a classical symmetry after quantization \cite{Schwinger:1951nm,Johnson:1963vz,adler1969axial,bell1969pcac,Fujikawa:1979ay,Fujikawa:1980eg,Harvey:2005it}. In the path integral formulation, this obstruction appears as a non-invariance of the quantum partition function under a transformation which is a symmetry of the classical theory \cite{Fujikawa:1979ay}. Consider a $d$-dimensional QFT on a spacetime manifold $M_d$ with global symmetry group $G$, coupled to a background $G$-connection $A$. For a non-anomalous symmetry, the partition function is invariant under background gauge transformations $g$, i.e.,
\begin{equation}
  Z_{M_d}[A^g]=Z_{M_d}[A]\,.
\end{equation}
In an anomalous theory, this invariance is replaced by multiplication by a phase:
\begin{equation} \label{eq:anomalousphase}
  Z_{M_d}[A^g]
  =
  \exp\!\left(2\pi \I \,\mathcal I(M_d;A,g)\right)
  Z_{M_d}[A]\,.
\end{equation}
Equivalently, the corresponding Ward identity is violated. For an infinitesimal transformation, namely when $g = \E^\lambda$ is infinitesimally close to the identity, the anomalous phase factor gives the perturbative, or local, anomaly. In this case one usually starts from an invariant closed $(d+2)$-form $I_{d+2}$, the anomaly polynomial. Locally one chooses a Chern--Simons form $I^{(0)}_{d+1}$ and applies the descent equations
\begin{align}
    I_{d+2}=\td I^{(0)}_{d+1}\,,
    \qquad
    \delta_{\lambda} I^{(0)}_{d+1}=\td I^{(1)}_d(\lambda)\,,
\end{align}
so that the infinitesimal anomalous phase factor is computed by
\begin{align}
    \delta_{\lambda}\log Z_{M_d}[A]
    =
    2\pi \I \int_{M_d} I^{(1)}_d(\lambda)\,.
\end{align}
The Adler--Bell--Jackiw anomaly \cite{adler1969axial,bell1969pcac} and perturbative gravitational anomalies \cite{Alvarez-Gaume:1983ihn} of chiral fermions are standard examples.

There may also exist anomalous phase factors that are invisible to infinitesimal variations and arise from nontrivial topology in the space of background fields.  These are called global anomalies and are detected by large gauge transformations. Witten's $SU(2)$ anomaly is the prototypical example~\cite{Witten:1982fp}, in this particular case $g$ is the large gauge transformation representing the nontrivial element in $\pi_4(S^3) \cong \Z_2$. 

A more modern viewpoint is that an anomaly is not merely a failure of invariance, but is itself a quantum field theory \cite{Monnier:2014rua,Monnier:2019ytc,Witten:2019bou,Witten:2015aba}. In the formulation of Freed~\cite{Freed:2014iua}, the anomaly of a $d$-dimensional QFT is encoded by an invertible extended $(d+1)$-dimensional field theory
\begin{equation}
  \alpha:
  \mathrm{Bord}_{\langle d-1,d,d+1\rangle}(\mathcal F')
  \longrightarrow
  \Sigma^{d+1}I_{\R/\Z}\,.
\end{equation}
Here $\mathcal F'$ denotes the relevant background fields, such as a metric, orientation, spin structure, gauge bundle, connection, and so on. The domain $\mathrm{Bord}_{\langle d-1,d,d+1\rangle}(\mathcal F')$ is the corresponding bordism category of manifolds equipped with these fields, with bordisms composed by gluing.
The notation $\langle d-1,d,d+1\rangle$ means that the anomaly theory is extended down to codimension two: it assigns compatible data to closed $(d+1)$-, $d$-, and $(d-1)$-dimensional manifolds with background fields. The codomain $\Sigma^{d+1}I_{\R/\Z}$ is Freed's compact notation for the universal target of such invertible anomaly theories. We will not need its detailed construction. The point is that, after exponentiation, $\alpha$ assigns phases to closed $(d+1)$-manifolds and a 1-dimensional complex vector space to closed $d$-manifolds.

Concretely, on a closed $(d+1)$-manifold we write the value of the anomaly theory as
\begin{equation}
    W_{d+1}\longmapsto
    Z_{\alpha}(W_{d+1})
    =
    \exp(2\pi \I \alpha(W_{d+1}))\in U(1)\,,
\end{equation}
and on a closed $d$-manifold it assigns a 1-dimensional complex vector space, called the anomaly line:
\begin{equation}
    M_d\longmapsto \alpha(M_d)\,.
\end{equation}
The partition function of an anomalous theory is naturally an element of this line,
\begin{equation}
    Z(M_d)\in \alpha(M_d)\,,
\end{equation}
rather than a canonically defined complex number. A trivialization of the anomaly line would turn it into an ordinary complex-valued partition function. The anomaly is the obstruction to choosing such trivializations compatibly over background fields and under gluing. In this sense, an anomalous theory is a relative field theory~\cite{Freed:2012bs}, defined relative to its anomaly theory $\alpha$.

The anomalous phase factor in~\eqref{eq:anomalousphase} is one manifestation of the same anomaly theory. The transformation $g$ defines a loop in the space of background fields, or equivalently a $(d+1)$-dimensional mapping torus $T_{d+1}(M_d,A,g)$ obtained from $M_d\times[0,1]$ by identifying the two ends using $g$. With the convention of~\eqref{eq:anomalousphase}, the anomaly theory evaluates on this mapping torus to the phase\footnote{This manifestation is also called Dai--Freed anomaly~\cite{Dai:1994kq,Garcia-Etxebarria:2018ajm}.} 
\begin{equation}
  Z_{\alpha}\!\left(T_{d+1}(M_d,A,g)\right) \in U(1)\,.
\end{equation}
Depending on the nature of the transformation $g$, this holonomy data reproduces the usual distinction between perturbative and global anomalies. For $g$ infinitesimally close to the identity, one obtains the local, perturbative anomaly, described by the curvature or infinitesimal variation of the anomaly theory. For large transformations, or more generally for nontrivial loops in the space of background fields, the same anomaly theory detects global anomalies through its holonomy. In this sense, the modern framework puts perturbative and global anomalies on equal footing.

When a local Chern--Simons representative exists, its gauge variation gives the descent expression above. For example, for a 2-dimensional theory with a 3-dimensional Chern--Simons anomaly theory, the $2d$ anomaly is obtained from $\delta {\rm CS}_3$, while the same information is encoded globally by the mapping torus partition function.

However, one should distinguish the general anomaly field theory from the data used for classification. In Freed's formulation, the anomaly theory $\alpha$ need \emph{not} be topological; it may depend on geometric data such as metrics and connections. This is essential for local gauge and gravitational anomalies, whose differential-geometric representatives involve Chern--Simons forms, eta invariants, determinant lines, Pfaffian lines, and their connections.

For classification purposes, we keep only the deformation class of $\alpha$, or equivalently the bordism invariants that capture the anomaly data. This is the level at which the bordism classification~\cite{Freed:2016rqq,Yonekura:2018ufj} applies; see also~\cite{Lee:2020ewl,Tachikawa:2021mby} for related uses in physics. More precisely, this framework classifies reflection-positive invertible phases with a specified symmetry type. The symmetry type, often denoted by $\mathcal B$, is also called the stable tangential structure in the mathematical literature. It includes both spacetime structures, such as $\rm SO $, $\rm Spin$, $\rm O$, $\rm Pin^\pm$ or string structures, and internal symmetry backgrounds.

Once a symmetry type $\mathcal B$ is specified, the possible anomaly classes in $(d+1)$ dimensions form the group
\begin{equation}
    {\rm Inv}_{\mathcal B}^{d+1}
    :=
    (I_{\Z}\Omega^{\mathcal B})^{d+2}({\rm pt}) \,.
\end{equation}
Here $\Omega^{\mathcal B}$ is the bordism theory associated with the symmetry type $\mathcal B$, and $I_{\Z}\Omega^{\mathcal B}$ denotes its Anderson dual. Thus, the formulation above keeps track of the anomaly field theory itself, while $(I_{\Z}\Omega^{\mathcal B})^{d+2}$ is the computable group of deformation classes used below.

It is practical to write the symmetry type $\mathcal B$ as a spacetime structure $\mathcal S$ together with a map from the spacetime manifold to a target space $X$ classifying internal (generalized) symmetry backgrounds. With this convention,
\begin{equation}
    \Omega^{\mathcal B}_d
    =
    \Omega^{\mathcal S}_d(X)
    :=
    \frac{
    \{\text{$d$-dimensional manifolds with $\mathcal S$-structure and a map to $X$}\}
    }{\rm bordism} \, .
\end{equation}
The corresponding group of bordism-level anomaly classes is usually expressed as
\begin{equation}
    (I_{\Z}\Omega^{\mathcal S})^{d+2}(X)\,,
\end{equation}
and it fits into the universal coefficient short exact sequence
\begin{equation} \label{eq:UCTforIOMEGA}
      0
      \longrightarrow
      \mathrm{Ext}^1_{\Z}\!\left(\Omega^{\mathcal S}_{d+1}(X),\Z\right)
      \longrightarrow
      (I_{\Z}\Omega^{\mathcal S})^{d+2}(X)
      \longrightarrow
      \mathrm{Hom}_{\Z}\!\left(\Omega^{\mathcal S}_{d+2}(X),\Z\right)
      \longrightarrow
      0 \, .
\end{equation}
For the finitely generated bordism groups considered here, the first term can be written as
\begin{equation}
    \mathrm{Ext}^1_{\Z}\!\left(\Omega^{\mathcal S}_{d+1}(X),\Z\right)
    =
    {\rm Hom}_{\Z}\!\left({\rm Tor}\,\Omega^{\mathcal S}_{d+1}(X),\R/\Z\right)\,,
\end{equation}
and accounts for the torsion, or global, part of the anomaly. The last term
\begin{equation}
    \mathrm{Hom}_{\Z}\!\left(\Omega^{\mathcal S}_{d+2}(X),\Z\right)
    =
    \mathrm{Hom}_{\Z}\!\left({\rm Free}\,\Omega^{\mathcal S}_{d+2}(X),\Z\right)
\end{equation}
captures the free part, detected by standard anomaly polynomials. Consequently, the computation of bordism-level anomalies reduces to determining the bordism groups $\Omega^{\mathcal S}_\bullet(X)$ and identifying the corresponding bordism invariants that detect their generators.

This framework applies not only to anomalies of ordinary 0-form continuous symmetries, but also to 0-form finite symmetries, higher-form symmetries and higher-group symmetries~\cite{Wan:2019oax,Wan:2019soo,Lee:2020ewl,Davighi:2023luh,Debray:2023rlx}. For a continuous $U(1)$ 1-form symmetry, the background gauge field is a $U(1)$-gerbe connection, or a 2-form $U(1)$ gauge field. Its topological class is classified by $B^2U(1)=K(\Z,3)$. Thus, in this paper we take
$X=K(\Z,3)$,
and study the anomaly groups
\begin{equation}
    (I_{\Z}\Omega^{\mathcal S})^{d+2}(K(\Z,3))\,,
    \qquad
    \mathcal S={\rm SO},\,{\rm Spin}\,.
\end{equation}
That is, we focus on bosonic and fermionic theories without time-reversal symmetry; including time reversal would require replacing these by the appropriate unoriented or pin structures. The bordism groups $\Omega^{\mathcal S}_\bullet(K(\Z,3))$ can be computed using the Atiyah--Hirzebruch spectral sequence (AHSS), one of the standard approaches. In this method, however, determining the differentials and resolving extension problems can be difficult. Another approach is the Adams spectral sequence (ASS), which exploits the full structure of modules over the Steenrod algebra, though its main difficulty lies in the required homological algebra. In this work, we adopt the suspension trick used in \cite{Joyce:2025jyd} and compute $\Omega^{\mathrm{SO}}_{\bullet \leqslant 8}(K(\Z,3))$ using the AHSS together with geometric arguments.

The resulting bordism invariants provide the anomaly terms used in the physical examples below, where the free classes give perturbative anomalies and the torsion classes give global or discrete anomalies.

\section{Bordism Computations for $\Omega^{\mathcal{S}}_{d\leqslant 8}(K(\Z,3))$}

\label{sec:KZ3}
As announced in the introduction, in this section we compute $\Omega^{\mathcal{S}}_{d\leqslant 8}(K(\Z,3))$ for $\mathcal{S} = \rm SO$ or $\rm Spin$, with the results summarized in Table~\ref{tab:bordism-KZ3-summary}. Readers interested only in the results and their physical interpretations may skip directly to Section~\ref{sec:physics}.

Our approach makes use of geometric arguments based on the AHSS (see, for instance~\cite{HatcherSS}, for a review) applied to the trivial fibration  \begin{equation} \label{eq:trivialfib}
    \text{pt} \longrightarrow X \longrightarrow X \,.
\end{equation}
 Since the base $K(\Z,3)$ is simply-connected, its fundamental group, $\pi_1(K(\Z,3)) = 0$, acts trivially on $\Omega^{\mathcal{S}}_\bullet({\rm pt})$, and therefore the AHSS has
\begin{equation} \label{eq:AHSSKZ3}
    E^2_{p,q} = H_p(K(\Z,3),\Omega^{\mathcal{S}}_q({\rm pt})) \Longrightarrow  \Omega^{\mathcal{S}}_{p+q}(K(\Z,3)) \,.
\end{equation}
To determine the $E^2$-page, one needs the bordism groups $\Omega^{\mathcal S}_q(\mathrm{pt})$ of the fiber $\mathrm{pt}$ and the homology groups $H_p(K(\Z,3))$ of the base in the fibration~\eqref{eq:trivialfib}.
 $\Omega^{\rm SO/ Spin}_{d\leqslant 9}(\rm pt)$  can be found in, e.g.,~\cite{MilnorStasheff1974, Garcia-Etxebarria:2018ajm}:
\begin{equation} \label{eq:pointbordism}
    \begin{tabular}{c|cccccccccc}
  $q$ & $0$ & $1$ & $2$ & $3$ & $4$ & $5$ & $6$ & $7$ & $8$ & $9$ \\ \midrule
$\Omega^{\rm SO}_q({\rm pt})$ & $\Z$ & $0$ & $0$ & $0$ & $ \Z $ & $\Z_2$ & $0$ & $0$ & $\Z^2$ & $\Z_2^2$  \\  \midrule
$\Omega^{\rm Spin}_q({\rm pt})$ & $\Z$ & $\Z_2$ & $\Z_2$ & $0$ & $\Z$ & $0$ & $0$ & $0$ & $\Z^2$ & $\Z_2^2$
\end{tabular}
\end{equation}

\paragraph{Classifying space $K(\Z,3)$ and its (co)homologies.} We are interested in anomalies of $U(1)$ 1-form symmetries, whose background gauge field is a $U(1)$-gerbe connection $B_{\mu\nu}$, i.e., a 2-form $U(1)$ gauge field. The classifying space for principal $U(1)$-bundles is the \emph{Eilenberg--Mac Lane} space
\begin{equation}
    BU(1) = K(\Z,2 ) \cong \bC  \mathbb{P}^\infty\,.
\end{equation}
More generally, the higher classifying space for an $n$-form $U(1)$ gauge field is
\begin{equation}
   B^nU(1) = K(\Z, n+1) \,.
\end{equation}
In particular, for a $U(1)$-gerbe we have $B^{2}U(1)=K(\Z,3)$, whose (co)homology groups are well known and are summarized below.
\begin{itemize}
    \item Integral homology group (see~\cite{Breen:2016dfpf} or page 404 of \cite{hatcher2002algebraic})
\begin{equation} \label{eq:integralhomology}
    \begin{tabular}{c|cccccccccc}
  $p$ & $0$ & $1$ & $2$ & $3$ & $4$ & $5$ & $6$ & $7$ & $8$ & $9$ \\ \midrule
$H_p(K(\Z, 3),\Z)$ & $\Z$ & $0$ & $0$ & $\Z$ & $ 0 $ & $\Z_2$ & $0$ & $\Z_3$ & $\Z_2$ & $\Z_2$
\end{tabular}
\end{equation}

    \item Integral cohomology group (see~\cite{Breen:2016dfpf} and Example~1.19 of \cite{HatcherSS})
    \begin{equation} \label{eq:integralcohomology}
    \begin{tabular}{c|cccccccccc}
  $p$ & $0$ & $1$ & $2$ & $3$ & $4$ & $5$ & $6$ & $7$ & $8$ & $9$ \\ \midrule
$H^p(K(\Z, 3),\Z)$ & $\Z$ & $0$ & $0$ & $\Z \,\iota$ & $ 0 $ & $0$ & $\Z_2 \, \iota^2 $ & $0$ & $\Z_3\, \kappa$ & $\Z_2 \,\iota ^3$
\end{tabular}
\end{equation}
    Let $\iota$ be the generator $H^3(K(\Z, 3),\Z)$. Then $H^6(K(\Z, 3),\Z)$ is generated by $ \iota \smile \iota$, also denote by $\iota^2$ for brevity. Graded commutativity of cup product implies that $2 (\iota \smile \iota) =0$, but does not force
$\iota \smile \iota$ itself to vanish; in fact, in this example we have $\iota \smile \iota \neq 0$. This can be understood by computing $H^\bullet(K(\Z, 3),\Z)$ using the standard path fibration and spectral sequence (see Example 5.20 of~\cite{HatcherSS})
\begin{equation}
   \Omega K(\Z,3) \cong K(\Z, 2)\longrightarrow P \longrightarrow K(\Z, 3) \,.
\end{equation}
We list the generators of the integral cohomology of $K(\Z,3)$ up to degree 9 for our further computations and note that there is only one independent generator $\kappa$ emerging.

    \item Mod-2 homology of $K(\Z, 3)$ can be computed using the universal coefficient theorem
\begin{equation} \label{eq:mod2homology}
    \begin{tabular}{c|cccccccccc}
  $p$ & $0$ & $1$ & $2$ & $3$ & $4$ & $5$ & $6$ & $7$ & $8$ & $9$ \\ \midrule
$H_p(K(\Z, 3),\Z_2)$ & $\Z_2$ & $0$ & $0$ & $\Z_2$ & $ 0 $ & $\Z_2$ & $\Z_2$ & $0$ & $\Z_2$ & $\Z_2^2$
\end{tabular}
\end{equation}
    \item Mod-2 cohomology ring~\cite{Cartan1954EM,HatcherSS, MimuraToda1991}
\begin{equation}\label{eq:mod2cohomology}
    \begin{tabular}{c|cccccccccc}
  $p$ & $0$ & $1$ & $2$ & $3$ & $4$ & $5$ & $6$ & $7$ & $8$ & $9$ \\ \midrule
$H^p(K(\Z, 3),\Z_2)$ & $\Z_2$ & $0$ & $0$ & $\Z_2$ & $ 0 $ & $\Z_2$ & $\Z_2$ & $0$ & $\Z_2 $ & $\Z_2^2 $
\end{tabular}
\end{equation}
The ring structure can be written as
\begin{equation} \label{eq:mod2ring}
    H^\bullet(K(\Z, 3),\Z_2) = \Z_2[u,\, \mathrm{Sq}^2 u,\mathrm{Sq}^4 \mathrm{Sq}^2 u,\cdots]\,,
\end{equation}
 where $u \in H^3(K(\Z,3),\Z_2)$ is the mod-2 reduction of $\iota$ and $\mathrm{Sq}^i$ are the Steenrod squares. Note that we have other independent generators $u^2 \in H^6(K(\Z, 3),\Z_2)$, $u \,\mathrm{Sq}^2 u  \in H^8(K(\Z, 3),\Z_2)$ and $u^3  \in H^9(K(\Z, 3),\Z_2)$.
\end{itemize}

In the analysis below, the $\Z_3$-(co)homology of $K(\Z,3)$ will also play a role, although this is not yet apparent. Indeed, for the degrees relevant to our computation, the coefficients $\Omega^{\mathcal S}_q(\mathrm{pt})$ are powers of $\Z$ or $\Z_2$. The reason why $\Z_3$-(co)homology~\cite{Cartan1954EM} nevertheless enters the discussion will become clear shortly.
\begin{itemize}
\item Mod-3 homology
\begin{equation} \label{eq:mod3homology}
    \begin{tabular}{c|ccccccccc}
  $p$ & $0$ & $1$ & $2$ & $3$ & $4$ & $5$ & $6$ & $7$ & $8$  \\ \midrule
$H_p(K(\Z, 3),\Z_3)$ & $\Z_3$ & $0$ & $0$ & $\Z_3$ & $ 0 $ & $0$ & $0$ & $\Z_3$ & $\Z_3$
\end{tabular}
\end{equation}
    \item Mod-3 cohomology
\begin{equation}\label{eq:mod3cohomology}
    \begin{tabular}{c|ccccccccc}
  $p$ & $0$ & $1$ & $2$ & $3$ & $4$ & $5$ & $6$ & $7$ & $8$  \\ \midrule
$H^p(K(\Z, 3),\Z_3)$ & $\Z_3$ & $0$ & $0$ & $\Z_3$ & $ 0 $ & $0$ & $0$ & $\Z_3$ & $\Z_3$
\end{tabular}
\end{equation}
The ring structure can be written as
\begin{equation} \label{eq:mod3ring}
    H^\bullet(K(\Z, 3),\Z_3) = \Z_3[v,\, \mathcal{P}^1 v, \beta_3 \mathcal{P}^1v,\cdots]\,,
\end{equation}
where $v \in H^3(K(\Z,3),\Z_3)$ is the mod-3 reduction of $\iota$, and $\beta_3: H^n(X,\Z_3)\rightarrow H^{n+1}(X,\Z_3)$ is the Bockstein homomorphism associated with the
coefficient sequence $0 \rightarrow \Z_3 \rightarrow \Z_9 \rightarrow \Z_3\rightarrow 0$.

\end{itemize}
For an odd prime $p$, the \emph{Steenrod reduced power operations} $\mathcal{P}^i$ are natural transformations
\begin{equation}
    \mathcal{P}^i : H^q(X, \Z_p) \longrightarrow  H^{q+2 i(p-1)}(X, \Z_p) \,, \qquad i\geqslant 0\,,
\end{equation}
which satisfy properties analogous to those of the Steenrod squares $\mathrm{Sq}^i$. A pedagogical account of these topics may be found in Chapter~VI, Section~15 of~\cite{Bredon}. With such operations one can explore in general the structure of $H^\bullet(K(\Z, n),\Z_p)$ as explained in~\cite{Cartan1954EM}. In our cases, we encounter the particular value of $p=3$ and we will need Theorem~19.7 (Wu) in~\cite{MilnorStasheff1974}
\begin{equation}
 \int_{M_7} \mathcal{P}^1 x = \int_{M_7} x \smile p_1(TM_7) \,,
\end{equation}
for an oriented 7-manifold $M_7$ and a class $x \in H^3(M_7,\Z_3)$.

\subsection{Oriented Bordism Groups} The $E^2$-page can be read off using the integral homology \eqref{eq:integralhomology}, the mod-2 homology \eqref{eq:mod2homology} and the oriented bordism of point \eqref{eq:pointbordism}
\begin{equation} \label{eq:AHSSorientedKZ3}
    E^2_{p,q} = H_p(K(\Z,3),\Omega^{\rm SO}_q({\rm pt})) \Longrightarrow  \Omega^{\rm SO}_{p+q}(K(\Z,3)) \,.
\end{equation}
The relevant results are presented in Table~\ref{tab:E2oriented}, where the columns and rows are counted by $p$ and $q$, respectively. Empty entries in the $E^2$-page are just trivial.

\begin{sseqdata}[ name = E2oriented, classes = { draw = none }, axes type =  frame, scale = 0.8 ]

\class["\Z"](0,0)
\class["\Z"](3,0)
\class["\Z_2"](5,0)
\class["\Z_3"](7,0)
\class["\Z_2"](8,0)
\class["\Z_2"](9,0)

\class["\Z"](0,4)
\class["\Z"](3,4)
\class["\Z_2"](5,4)
\class["\Z_3"](7,4)
\class["\Z_2"](8,4)
\class["\Z_2"](9,4)

\class["\Z_2"](0,5)
\class["\Z_2"](3,5)
\class["\Z_2"](5,5)
\class["\Z_2"](6,5)
\class["\Z_2"](8,5)
\class["\Z_2^2"](9,5)

\class["\Z^2"](0,8)
\class["\Z^2"](3,8)
\class["\Z_2^2"](5,8)
\class["\Z_3^2"](7,8)
\class["\Z_2^2"](8,8)
\class["\Z_2^2"](9,8)

\end{sseqdata}

\begin{table}[htbp]
  \centering
  \printpage[
    name = E2oriented,
    grid = chess,
  ]
  \caption{The $E^2$-page of entries $E^2_{p,q} = H_p(K(\Z,3),\Omega^{\rm SO}_q({\rm pt}))$.}
  \label{tab:E2oriented}
\end{table}
Recall that there is a canonical decomposition
\begin{equation}
    \Omega^{\mathcal{S}}_{i}(K(\Z,3)) \cong \Omega^{\mathcal{S}}_{i}({\rm pt}) \oplus \tilde \Omega^{\mathcal{S}}_{i}(K(\Z,3)) \,,
\end{equation}
which indicates that all differentials involving the $p=0$ column must be trivial. Then, it is easy to see from Table~\ref{tab:E2oriented} that
\begin{align}
  \Omega^{\rm SO}_{i}(K(\Z,3)) \cong \left\{\begin{array}{ll}
\Z & i =0  \\
0 & i =1,2 \\
\Z \cong H_3(K(\Z,3),\Z)& i = 3 \\
\Z \cong \Omega^{\rm SO}_4{(\rm pt)}& i = 4\\
\Z_2^2 \cong \Omega^{\rm SO}_5{(\rm pt)} \oplus H_5(K(\Z,3),\Z) & i = 5\\
0& i = 6
\end{array}\right.\,.
\end{align}

\subsubsection{$\Omega^{\rm SO}_{7}(K(\Z,3))$ and Milnor's Construction}
\label{sec:Milnor}

What is interesting here begins with $i= p+q =7$, and $E^2_{7,0} = H_7(K(\Z,3),\Z) \cong \Z_3$ obviously stabilizes to $E^\infty_{7,0}$ since there are no nontrivial differentials hitting or leaving it, while the possible differential $\td^5$ on the $E^5$-page is also trivial due to the group values
\begin{equation} \label{eq:d5KZ3}
    \td^5 : (E^5_{8,0})_{=H_8(K(\Z,3),\Z) \cong \Z_2} \longrightarrow (E^5_{3,4})_{=H_3(K(\Z,3),\Z) \cong \Z}\,.
\end{equation}
Hence, $E^2_{3,4} = E^5_{3,4} = E^\infty_{3,4} = H_3(K(\Z,3),\Z)$ stabilizes and we get
\begin{equation}
    \Omega^{\rm SO}_{7}(K(\Z,3)) \cong \tilde \Omega^{\rm SO}_{7}(K(\Z,3)) \,,
\end{equation}
sitting in the short exact sequence
\begin{equation} \label{eq:seqOmega7orient}
    0 \longrightarrow   H_3(K(\Z,3),\Z)_{\cong \Z} \xrightarrow{\quad I \quad }   \Omega^{\rm SO}_{7}(K(\Z,3)) \xrightarrow{\quad P \quad } H_7(K(\Z,3),\Z)_{\cong\Z_3} \longrightarrow  0\,.
\end{equation}
To see why we only have the short exact sequence~\eqref{eq:seqOmega7orient}, recall that the AHSS actually computes
\begin{equation}
    \text{Gr}(\Omega^{\rm SO}_{7}(K(\Z,3))) = E^\infty_{0,7} \oplus E^\infty_{1,6} \oplus E^\infty_{2,5} \cdots \oplus E^\infty_{7,0}\,,
\end{equation}
where the same graded group $\text{Gr}(\Omega^{\rm SO}_{7}(K(\Z,3)))$ is by definition also obtained from the filtration of the CW-structure\footnote{For a CW-complex $X$, there exists a sequence of skeletons $\cdots X^{(p)} \subset X^{(p+1)} \cdots$, where $X^{(p)}$ is the $p$-skeleton of $X$. Then for any generalized homology theory $\mathcal{H}_\bullet$, the groups $G_{n,n-p}:= {\rm Im} (\mathcal{H}_n(X^{(p)}) \rightarrow \mathcal{H}_n(X))$ form a filtration
\begin{equation*}
     G_{0,n} \subset G_{1,n-1} \subset G_{2,n-2} \subset \cdots \subset G_{n,0} = \mathcal{H}_n(X)\,.
\end{equation*} The associated graded group is
\begin{equation}
    \text{Gr}(\mathcal{H}_n(X)) :=  G_{0,n} \oplus \frac{G_{1,n-1}}{G_{0,n}} \oplus \frac{G_{2,n-2}}{G_{1,n-1}}\oplus \cdots \oplus \frac{G_{n,0}}{G_{n-1,1}} \,.
\end{equation}} of $K(\Z,3)$
\begin{equation}
    G_{0,7} \subset G_{1,6} \subset G_{2,5} \subset \cdots \subset G_{7,0} = \Omega^{\rm SO}_{7}(K(\Z,3))
\end{equation}
and
\begin{equation}
    \text{Gr}(\Omega^{\rm SO}_{7}(K(\Z,3))) := G_{0,7} \oplus \frac{G_{1,6}}{G_{0,7}} \oplus \frac{G_{2,5}}{G_{1,6}}\oplus \cdots \oplus \frac{G_{7,0}}{G_{6,1}} \,.
\end{equation}
The AHSS actually tells that
\begin{equation}
    E^\infty_{p,q}\;\cong\; G_{p,q}/G_{p-1,q+1}.
\end{equation}
The only two nonvanishing values $E^\infty_{3,4}$ and $E^\infty_{7,0}$ in our result show that the filtration collapses to
\begin{equation}
  G_{0,7}  \cong G_{1,6} \cong G_{2,5} \subset G_{3,4}  \cong G_{4,3} \cong \cdots \cong G_{6,1}\subset G_{7,0}\,.
\end{equation}
The CW-structure of $K(\Z,3)$ tells that $G_{2,5} =\Omega^{\rm SO}_{7}(\rm pt) = 0$. Thus, our AHSS computation gives
\begin{equation}
    \text{Gr}(\Omega^{\rm SO}_{7}(K(\Z,3))) = E^\infty_{3,4} \oplus E^\infty_{7,0} = G_{3,4} \oplus \frac{G_{7,0}}{G_{3,4}} \,.
\end{equation}
However, this alone does not tell if the short exact sequence~\eqref{eq:seqOmega7orient} splits; we need to solve this extension problem with more geometric considerations as follows.

Surjectivity of the \emph{edge} homomorphism $P$ indicates that any class in $H_7(K(\Z,3),\Z)$ can be realized as the image $f(M_7)$ of a map $f : M_7 \longrightarrow K(\Z,3)$, where $M_7$ is an oriented seven manifold and $[M_7, f]$ represents a class in $\Omega^{\rm SO}_{7}(K(\Z,3))$.
Furthermore, recall~\eqref{eq:mod3cohomology}, this group is purely 3-torsion, and from the knowledge of mod-3 cohomology ring of $K(\Z,3)$, we see that it can be detected by the generator $\mathcal{P}^1v$ of $ H^7(K(\Z,3),\Z_3) \cong \Z_3$. 
In other words, we have the usual Kronecker pairing
\begin{equation}
\begin{aligned} \label{eq:Z3bordismpairing}
    H_7(K(\Z,3),\Z_3) \times H^7(K(\Z,3),\Z_3) &\longrightarrow \Z_3 \\
  [M_7, f] \times \mathcal{P}^1v &\longmapsto \int_{M_7} f^*(\mathcal{P}^1v) = \int_{M_7} f^*(v) \smile p_1(TM_7)  \,.
\end{aligned}
\end{equation}
With this pairing at hand, we can detect if an element $[M_7, f]  \in \Omega^{\rm SO}_{7}(K(\Z,3))$ is sent to the generator of $H_7(K(\Z,3),\Z)$, which happens if and only if the pairing evaluates to $\pm 1 \mod 3$.

Now for the extension problem, if we can
\begin{itemize}
    \item construct a class $[\tilde M_7, \tilde f] \in \Omega^{\rm SO}_{7}(K(\Z,3))$ mapped to the generator of $H_7(K(\Z,3),\Z)$;
    \item and show that $[\tilde M_7, \tilde f]$ is an element of infinite order in $\Omega^{\rm SO}_{7}(K(\Z,3))$,
\end{itemize}
then the sequence~\eqref{eq:seqOmega7orient} is non-split. As for the latter point, we will need a $\Z$-valued bordism invariant as an extra input. Such a bordism invariant is easy to guess and it is
\begin{equation}
\begin{aligned} \label{eq:Zbordisminvariant}
     \Phi :  \Omega^{\rm SO}_{7}(K(\Z,3)) &\longrightarrow \Z \\
    [M_7,f] & \longmapsto \int_{M_7}  f^*(\iota) \smile p_1(TM_7) \,.
\end{aligned}
\end{equation}

\paragraph{Milnor's construction.} The desired class $[\tilde M_7, \tilde f]$ can be obtained by a construction due to Milnor in his famous work~\cite{MilnorSphere} on the exotic spheres. Consider the total space of a $S^3$-bundle over $S^4$ with structure group $SO(4)$
\begin{equation}
\begin{tikzcd}
S^3 \arrow[r, hookrightarrow] & M_{n_L,n_R} \arrow[d] \\
                              & S^4
\end{tikzcd}
\end{equation}
where, in general, isomorphism classes of such sphere bundles are classified by
\begin{equation}
    \pi_3(SO(4)) \cong \Z \oplus \Z\,,
\end{equation}
and hence by two integers $n_L$ and $n_R$. It is conventional to write the total space as $M_{n_L,n_R}$ and in physics terms the two integers are just instanton numbers of the 2-fold covering group of $SO(4)$
\begin{equation}
    \frac{SU(2)_L \times SU(2)_R}{ \Z_2} \cong SO(4).
\end{equation}
For such sphere bundles, their characteristic classes are well-understood (see, e.g.,~\cite{mcenroe2015milnor, CROWLEY2003363}). Let $\alpha \in H^4(S^4,\Z) \cong \Z$ be the generator, then the Euler class $e$ and the first Pontryagin class\footnote{Unless otherwise specified, $p_i$ denotes the Pontryagin classes of the tangent bundles of the manifolds under consideration.} $p_1$ can be expressed as
\begin{equation}
    e = (n_L - n_R) \alpha\,, \qquad
    p_1 = 2 (n_L + n_R) \alpha \,.
\end{equation}
Note that Milnor showed that $M_{n_L,n_R}$ is a \emph{topological sphere} (homeomorphic to $S^7$) if and only if $n_L - n_R = \pm 1$.

For the present purpose, take $\tilde M_7 = M_{1,1}$, for which $e =0$ and $p_1 =4 \alpha$.\footnote{We thank Yuji Tachikawa for suggesting this geometric construction as an approach to the extension problem.} Its cohomology follows from the Gysin sequence:
\begin{align}
  H^i(\tilde M_7, \Z)=\left\{\begin{array}{ll}
\Z & \text{for} \; i =0,3,4,7  \\
0 & \text{otherwise} \\
\end{array}\right.\,.
\end{align}
The map $\tilde f: \tilde M_7 \longrightarrow K(\Z,3)$ is such that $\tilde f^*(\iota)$ equals the generator of $H^3(\tilde M_7, \Z)$ which comes from the generator of $ H^3(S^3,\Z) \cong \Z$ on the fiber. $\tilde M_7$ is obviously oriented, so we have a class $[\tilde M_7, \tilde f] \in \Omega^{\rm SO}_{7}(K(\Z,3))$.
Since
\begin{equation}
      [\tilde M_7, \tilde f] \times \mathcal{P}^1 v \longmapsto \int_{\tilde M_7} \tilde f^*(\mathcal{P}^1 v) = \int_{\tilde M_7 } \tilde f^*(v) \smile p_1(T\tilde M_7)  = 1 \mod 3\,,
\end{equation}
we see that $[\tilde M_7, \tilde f]$ is mapped to the generator of $H_7(K(\Z,3),\Z)$.
Now using~\eqref{eq:Zbordisminvariant}
\begin{equation}
    \Phi ( [\tilde M_7, \tilde f]) =  \int_{\tilde M_7}  \tilde f^*(\iota) \smile p_1(T\tilde M_7) = 4 \,,
\end{equation}
we can conclude that $[\tilde M_7, \tilde f]$ is of infinite order in $\Omega^{\rm SO}_{7}$ and the sequence~\eqref{eq:seqOmega7orient} does not split and this proves the following result
\begin{theorem} \label{theoremOmega7}
 $ \boxed{\Omega^{\rm SO}_{7}(K(\Z,3))  \cong \Z}$\,.
\end{theorem}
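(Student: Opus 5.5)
The proof will assemble three ingredients: the short exact sequence \eqref{eq:seqOmega7orient}, the mod-$3$ detection of the quotient via the pairing \eqref{eq:Z3bordismpairing}, and the integer-valued invariant $\Phi$ of \eqref{eq:Zbordisminvariant}. First I will confirm that the AHSS data really leave only $E^\infty_{3,4}\cong\Z$ and $E^\infty_{7,0}\cong\Z_3$ in total degree $7$. All other entries on that diagonal vanish in Table~\ref{tab:E2oriented}, and the only potentially nontrivial differential is $\td^5$ in \eqref{eq:d5KZ3}, which is a map $\Z_2\to\Z$ and therefore zero. This yields the extension $0\to\Z\to\Omega^{\rm SO}_7(K(\Z,3))\to\Z_3\to 0$. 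Any such group is abstractly either $\Z\oplus\Z_3$ or $\Z$, so the task reduces to ruling out the split case, i.e.\ showing the group is torsion-free.

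Second, I will check that $\Phi$ is a well-defined homomorphism on bordism classes. If $(W_8,F)$ is an oriented null-bordism of $(M_7,f)$, then $p_1(TW)$ restricts to $p_1(TM)$ because the stable normal bundle of the boundary is trivial. Stokes' theorem (equivalently, the vanishing of the pairing of a restricted class with the boundary fundamental class) then gives $\Phi=0$. Additivity under disjoint union is immediate.

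Third, I will use Milnor's $\tilde M_7=M_{1,1}$ with $\tilde f^*\iota$ equal to the fiber class $x_3$. The Gysin sequence with $e=0$ gives $H^3\cong H^4\cong\Z$, and $x_3\smile\pi^*\alpha$ generates $H^7$. Since $p_1=4\pi^*\alpha$, this gives $\Phi([\tilde M_7,\tilde f])=\pm4$ and the $\mathcal P^1 v$ pairing equals $4\equiv1 \bmod 3$ by Wu's formula. Hence $[\tilde M_7,\tilde f]$ maps to a generator of $\Z_3$ under $P$.

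Finally, I will argue torsion-freeness. The torsion subgroup injects into the quotient $\Z_3$ because $I(\Z)$ is torsion-free, so any torsion element $t$ is congruent modulo $\mathrm{im}\,I$ to a multiple of $[\tilde M_7,\tilde f]$. If $t\neq0$, then $t=k[\tilde M_7,\tilde f]+I(m)$ with $k\not\equiv0 \bmod 3$. On the other hand, $\Phi(t)=0$, and $\Phi\circ I$ is multiplication by $\Phi(I(1))$. I expect this to be the main obstacle, and I would compute $\Phi(I(1))$ explicitly. The generator of $E^\infty_{3,4}$ is represented by $S^3\times K3$ or $S^3\times\mathbb{CP}^2$ with $f$ pulled back from the $S^3$ factor, which gives $\Phi(I(1))=3$, from $p_1(\mathbb{CP}^2)=3$. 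Then $0=\Phi(t)=\pm4k+3m$ forces $3\mid k$, a contradiction. Therefore the torsion subgroup is trivial, the extension is non-split, and $\Omega^{\rm SO}_7(K(\Z,3))\cong\Z$. As a consistency check, $\Phi$ has image $\gcd(3,4)\Z=\Z$, so the generator is $[\tilde M_7,\tilde f]-[S^3\times\mathbb{CP}^2]$ with an appropriate choice of orientations.
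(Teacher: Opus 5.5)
Your proof is correct and follows the paper's route: the same AHSS extension $0\to\Z\to\Omega^{\rm SO}_7(K(\Z,3))\to\Z_3\to0$, Milnor's $M_{1,1}$ detected by $\mathcal P^1 v$, the invariant $\Phi$, and the value $\Phi([S^3\times\mathbb{CP}^2])=3$, which the paper records in its ``crosscheck'' that $3[\tilde M_7,\tilde f]=I(4\varphi)$ with $4\not\equiv 0 \bmod 3$. Your organization is the logically tighter one: infinite order of $[\tilde M_7,\tilde f]$ alone would not exclude $\Z\oplus\Z_3$ (where $(1,1)$ has infinite order and maps to a generator), so $\Phi(I(1))=3$ is the essential input; you should, however, drop the ``$S^3\times K3$'' alternative, since $K3$ represents $-16[\mathbb{CP}^2]$ and not a generator of $\Omega^{\rm SO}_4(\mathrm{pt})$.
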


On the other hand, we would also like to understand the injection $I$ from $H_3(K(\Z,3),\Z)$ to $\Omega^{\rm SO}_{7}(K(\Z,3))$ in~\eqref{eq:seqOmega7orient}. By the Hurewicz theorem, $H_3(K(\Z,3),\Z) \cong \pi_3(K(\Z,3)) =[S^3, K(\Z,3)]$. Let $\varphi : S^3  \rightarrow K(\Z,3)$ represent the generator of $H_3(K(\Z,3),\Z) \cong \Z$. Then, the geometric realization of the map $I$ is
\begin{equation}
    \begin{aligned} \label{eq:itheinjection}
        I :  H_3(K(\Z,3),\Z) &\longrightarrow   \Omega^{\rm SO}_{7}(K(\Z,3)) \\
        \varphi & \longmapsto [S^3 \times \mathbb{CP}^2, \varphi \circ {\rm proj}_{S^3} ] \,,
    \end{aligned}
\end{equation}
where we used the fact that $E^\infty_{3,4} = H_3(K(\Z,3),\Omega^{\rm SO}_4({\rm pt}))= H_3(K(\Z,3),\Z)$ and that $[\mathbb{CP}^2]$ is the standard geometric generator of $\Omega^{\rm SO}_4({\rm pt})$, and ${\rm proj}_{S^3} :  S^3 \times \mathbb{CP}^2 \rightarrow S^3$ is the projection to $S^3$.

As a crosscheck of our proposal, we observe that $P(3 \times [\tilde M_7, \tilde f]) = 0 \in H_7(K(\Z,3),\Z)$, and hence $3 \times [\tilde M_7, \tilde f] \in \ker(P)=\im(I)$. Evaluating the image in~\eqref{eq:itheinjection}, we have
\begin{equation}
    \Phi ([ S^3 \times \mathbb{CP}^2, \varphi \circ {\rm proj}_{S^3} ]) = \int_{S^3 \times \mathbb{CP}^2}  (\varphi \circ {\rm proj}_{S^3})^*(\iota) \smile p_1 = \int_{S^3} \varphi^*(\iota)  \int_{\mathbb{CP}^2} p_1 = 3\,.
\end{equation}
Since $\Phi([\tilde M_7,\tilde f])=4$, we conclude that the element $4 \times \varphi \in H_3(K(\Z,3),\Z)$ is sent to $3 \times [\tilde M_7, \tilde f]$ by $I$.
Extensions of $\Z_3$ by $\Z$ are classified by the Abelian group $\mathrm{Ext}^1_{\Z}(\Z_3,\Z) \cong \Z_3$. 
The integer $1 = 4 \mod 3$ indicates the nontrivial extension class $1\in \mathrm{Ext}^1_{\Z}(\Z_3,\Z)$. This is consistent with the non-splitness of the short exact sequence~\eqref{eq:seqOmega7orient}.

\subsubsection{$\Omega^{\rm SO}_{8}(K(\Z,3))$ and the Extension Problem}
For $p+q = 8$, the entry $E^2_{8,0} = H_8(K(\Z,3),\Z) \cong \Z_2$ survives to the final page, for there is no nontrivial differential from it to the 7-th diagonal.

Next, we have the following lemma:
\begin{lemma} \label{lemma1}
The two possible differentials
\begin{equation}
     \td^2 : (E^2_{5,4})_{= H_5(K(\Z,3),\Z) \cong \Z_2} \longrightarrow (E^2_{3,5})_{ = H_3(K(\Z,3),\Z_2) \cong \Z_2}
\end{equation}
and
\begin{equation}
     \td^6 : (E^6_{9,0})_{=H_9(K(\Z,3),\Z) \cong \Z_2} \longrightarrow (E^6_{3,5})_{=H_3(K(\Z,3),\Z_2)\cong \Z_2}
\end{equation} in the Atiyah--Hirzebruch spectral sequence \eqref{eq:AHSSorientedKZ3} for $\Omega^{\rm SO}_{\bullet}(K(\Z,3))$ both vanish.
\end{lemma}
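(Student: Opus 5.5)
The approach is to work $2$-locally and use the fact that the oriented Thom spectrum splits $2$-locally into Eilenberg--Mac Lane spectra. Both targets, $E^2_{3,5}\cong H_3(K(\Z,3),\Z_2)$ and its subquotient $E^6_{3,5}$, are $2$-groups. Localization at $2$ is exact and commutes with the Atiyah--Hirzebruch spectral sequence. So it suffices to show that the AHSS for $(MSO_{(2)})_\bullet(K(\Z,3))$ has vanishing differentials in total degrees $\leqslant 9$.

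\textbf{Step 1.} Recall Wall's theorem: $MSO_{(2)}$ is equivalent to a wedge of suspensions of $H\Z_{(2)}$ and $H\Z_2$. In the range we need this reads
\begin{equation*}
\tau_{\leqslant 9}MSO_{(2)}\simeq \tau_{\leqslant 9}\big(H\Z_{(2)}\vee\Sigma^4H\Z_{(2)}\vee\Sigma^5H\Z_2\vee\Sigma^8(H\Z_{(2)})^2\vee\Sigma^9 H\Z_2^2\big),
\end{equation*}
which matches the coefficient table \eqref{eq:pointbordism} after localization. Equivalently, every $2$-primary $k$-invariant of $MSO$ vanishes in this range. The only one that could interfere with our differentials is the potential $\mathrm{Sq}^2$-type $k$-invariant linking the $\Omega_4$ and $\Omega_5$ layers. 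It vanishes because the Wu manifold $SU(3)/SO(3)$, which generates $\Omega^{\rm SO}_5$, is detected by the Stiefel--Whitney number $w_2w_3$, i.e.\ by a class in $H^5(MSO;\Z_2)$ that is independent of the $H\Z$-summands.

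\textbf{Step 2.} The AHSS is natural in the coefficient spectrum. Since $\tau_{\leqslant 9}MSO_{(2)}$ is a wedge, the AHSS for $MSO_{(2)}\wedge K(\Z,3)$ in total degree $\leqslant 9$ is the direct sum of the AHSS's for $\Sigma^k H A\wedge K(\Z,3)$. The truncation is harmless because differentials landing in total degree $7$ only involve the Postnikov sections up to degree $9$. Each summand's spectral sequence is the AHSS for ordinary homology $H_\bullet(K(\Z,3);A)$, which collapses at $E^2$ because it is concentrated in a single row.

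\textbf{Step 3.} Under this splitting, row $q=4$ comes from $\Sigma^4H\Z_{(2)}$, while row $q=5$ comes from $\Sigma^5H\Z_2$. Any differential between them would be a map between distinct wedge summands induced by the spectral sequence, and there is none. Hence $\td^2:E^2_{5,4}\to E^2_{3,5}$ vanishes. Likewise row $q=0$ comes from $H\Z_{(2)}$, so $\td^6:E^6_{9,0}\to E^6_{3,5}$, which would connect the $H\Z$ summand to the $\Sigma^5H\Z_2$ summand, also vanishes.

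The main obstacle is Step 2: justifying carefully that the AHSS, which is built from the skeletal filtration of $K(\Z,3)$, is compatible with a splitting of the coefficient spectrum, so that no cross-summand differentials can appear. I would argue by naturality. The inclusion and projection maps of each wedge summand induce maps of spectral sequences, and their composites give idempotents on the $E^r$-pages that commute with all $\td^r$.

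As a cross-check, one should be careful with the tempting formula ``$\td^2=$ reduction composed with the dual of $\mathrm{Sq}^2$''. This formula holds for rows $0\to1$ and $1\to2$ in the spin AHSS. Here it would give a nonzero map, since $\mathrm{Sq}^2u\neq0$, but it does not apply to rows $4\to5$ of the oriented AHSS, precisely because the corresponding $k$-invariant of $MSO$ is trivial.
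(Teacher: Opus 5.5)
Your argument is correct, but it is not the route the paper's proof takes. It is a fleshed-out version of the one-sentence remark the paper makes just before the proof: the oriented AHSS differentials come from the $k$-invariants of $MSO$, whose $2$-primary parts vanish.

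The paper's actual proof is hands-on. For $\td^2$ it uses naturality along $\chi:\Sigma K(\Z,2)\to K(\Z,3)$. The comparison map $\bar E^2_{4,4}\to E^2_{5,4}$ is surjective and $\bar E^2_{2,5}\to E^2_{3,5}$ is an isomorphism. Moreover $\bar\td^2=0$ is forced by the known value $\tilde\Omega^{\rm SO}_7(K(\Z,2))\cong\Z_2$. For $\td^6$ the suspension trick fails, because the comparison map $\bar E^6_{8,0}\to E^6_{9,0}$ is zero. The paper instead notes $G_{2,6}=0$, hence $E^\infty_{3,5}=G_{3,5}$. It then exhibits the nonzero class $[S^3\times SU(3)/SO(3),\varphi\circ{\rm proj}_{S^3}]$, which comes from the $3$-skeleton and is detected by $u\,w_2w_3$. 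That alone forces $E^\infty_{3,5}=E^2_{3,5}$, so it kills $\td^2$ as well.

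Your route buys uniformity. Once $MSO_{(2)}$ is a wedge of Eilenberg--Mac Lane spectra, the AHSS for $MSO_{(2)}\wedge X$ is a direct sum of one-row spectral sequences. Hence every differential into a $2$-primary group vanishes, for every $X$ and every degree, with no case-by-case work. The cost is reliance on Wall's structural theorem. The paper needs only naturality and explicit bordism invariants, and along the way it produces the generator and the invariant $\Phi_1=u\,w_2w_3$ that it reuses to split the extension for $\tilde\Omega^{\rm SO}_8(K(\Z,3))$.

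One imprecision in your Step 1: $\td^6$ out of row $0$ is governed by the component of the degree-$5$ $k$-invariant on the bottom $H\Z_{(2)}$ layer. So your claim that only the $\mathrm{Sq}^2$-type component linking $\Omega_4$ and $\Omega_5$ can interfere undercounts. Your Wu-manifold observation nevertheless handles both components at once. Since $w_2w_3[SU(3)/SO(3)]\neq 0$, the fundamental class of the fiber $\Sigma^5H\Z_2$ lies in the image of restriction from $H^5(\tau_{\leqslant 5}MSO;\Z_2)$. So its transgression, which is the entire $k_5\in H^6(\tau_{\leqslant 4}MSO;\Z_2)$, vanishes, and $\tau_{\leqslant 5}MSO\simeq\tau_{\leqslant 4}MSO\vee\Sigma^5H\Z_2$. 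That splitting is all the lemma needs, so you do not even need the full Wall theorem. Also, the differentials in question land in total degree $8$, not $7$.
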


Abstractly speaking, in the AHSS for oriented bordism, the differentials are shown~\cite{Maunder1964} to be induced by the stable cohomology operations determined by the Postnikov $k$-invariant of the Thom spectrum $MSO$~\cite{Thom1954}. Here we are only dealing with purely 2-torsion groups, the only possible nonzero differential is controlled by 2-primary $k$-invariants. Since these 2-primary $k$-invariants are known to be trivial~\cite{Taylor1976,Troue1966}, the corresponding $\td^2$ and $\td^6$ vanish.\footnote{%
If one is interested only in $(I_{\Z}\Omega^{\rm SO})^7(K(\Z,3))$, the cohomological AHSS for the Anderson dual gives a shorter computation. Using $(I_{\Z}\Omega^{\rm SO})^4({\rm pt})=\Z\langle p_1/3\rangle$, the only nonzero differential relevant for total degree seven is
\begin{equation*}
\begin{aligned}
\td_5^{3,4}: E_5^{3,4}
&\cong\Z\langle \iota\,\frac{p_1}{3}\rangle
\longrightarrow E_5^{8,0}
\cong\Z_3\langle\kappa\rangle,\\
\td_5^{3,4}\left(\iota\,\frac{p_1}{3}\right)
&=\beta\mathcal{P}^1\rho(\iota)=\kappa\,.
\end{aligned}
\end{equation*}
Here $\rho$ denotes reduction modulo $3$, and $\beta$ is the integral Bockstein associated with $0\to\Z\xrightarrow{\times 3}\Z\to\Z_3\to 0$. The underlying $k$-invariant $\beta\mathcal{P}^1\rho$ was originally determined by Thom~\cite{Thom1954}; the corresponding cohomological AHSS differential $\td_5^{3,4}$ in modern notation is discussed and computed in~\cite{SPTandSteenrod}. Since this differential is surjective, $E_\infty^{3,4}=\ker(\td_5^{3,4})$ is generated by $\iota p_1$, which gives $(I_{\Z}\Omega^{\rm SO})^7(K(\Z,3))\cong\Z \langle\iota p_1\rangle$. We nevertheless use the homological AHSS in the main text because we also want to determine the bordism group itself, resolve its extension problem, and identify its geometric generators.\label{footnote:cohomologyAHSS}}

For a more concrete proof of Lemma~\ref{lemma1}, we just use the known result~\cite{Wan:2018bns} $\tilde \Omega^{\rm SO}_{7}(K(\Z,2))\cong \Z_2$. The key idea, following~\cite{Joyce:2025jyd}, is to consider the maps between spectral sequences induced by the suspension isomorphism\footnote{At times we work with reduced (co)homology $\tilde H$, since it satisfies the Eilenberg--Steenrod axioms, including the suspension property. Because reduced and unreduced (co)homology differ only in degree 0, we will use $H$ and $\tilde H$ interchangeably in positive degrees. The same convention applies to reduced bordism.}
 $\tau : \tilde H^\bullet(K(\Z,2),\Z) \rightarrow  \tilde H^{\bullet+1}(\Sigma K(\Z,2),\Z)$.
\begin{proof}[Proof of Lemma~\ref{lemma1}]
Let $c_1$ (first Chern class) denote the generator of $\tilde H^2(K(\Z,2),\Z)$. Then $\tau(c_1) \in \tilde H^{3}(\Sigma K(\Z,2),\Z) \cong \Z$ is the generator. On the other hand, we have the property of Eilenberg--Mac Lane spaces
\begin{equation}
    \begin{aligned}
         \tilde H^{3}(\Sigma K(\Z,2),\Z) &\xrightarrow{\;\cong\;} [\Sigma K(\Z,2), K(\Z,3)] \\
         \tau(c_1) &\longmapsto \chi \,,
    \end{aligned}
\end{equation}
which determines a map $\chi : \Sigma K(\Z,2)\rightarrow K(\Z,3) $ up to homotopy and with the property that
\begin{equation}
    \begin{aligned}
         \chi^*:\tilde H^{3}(K(\Z,3),\Z) &\xrightarrow{\;\cong\;} \tilde H^{3}(\Sigma K(\Z,2),\Z) \\
          \iota &\longmapsto \tau(c_1) \,.
    \end{aligned}
\end{equation}
We also have dually induced maps $\chi_*$ on the (generalized) homology side.
Since the reduced bordism is also a generalized homology theory, we have a suspension isomorphism $\tilde \Omega^{\rm SO}_{i}(K(\Z,2))\xrightarrow{\;\cong\;}\tilde \Omega^{\rm SO}_{i+1}(\Sigma K(\Z,2))$. It can be composed with $\chi_*$ to get a morphism between spectral sequences
\begin{equation}
    \tilde \Omega^{\rm SO}_{i+1}( K(\Z,2)) \xrightarrow{ \;\cong\;} \tilde \Omega^{\rm SO}_{i+1}(\Sigma K(\Z,2)) \xrightarrow{ \;\chi_*\;} \tilde \Omega^{\rm SO}_{i+1}( K(\Z,3)) \,,
\end{equation}
which we will still refer to as $\chi_*$.
Recall the AHSS for $\tilde \Omega^{\rm SO}_{\bullet}( K(\Z,2))$ (here the entries will be denoted as $\bar E^r_{p,q}$)
 \begin{equation} \label{AHSSorientedKZ2}
    \bar E^2_{p,q} = \tilde H_p(K(\Z,2), \Omega^{\rm SO}_q({\rm pt})) \Longrightarrow  \tilde \Omega^{\rm SO}_{p+q}(K(\Z,2)) \,,
\end{equation}
whose $E^2$-page is presented in Table~\ref{tab:E2orientedKZ2}.
\begin{sseqdata}[ name = E2orientedKZ2, classes = { draw = none }, axes type =  frame, scale = 0.8 ]

\class(0,0)

\class["\Z"](2,0)
\class["\Z"](4,0)
\class["\Z"](6,0)
\class["\Z"](8,0)

\class["\Z"](2,4)
\class["\Z"](4,4)
\class["\Z"](6,4)
\class["\Z"](8,4)

\class["\Z_2"](2,5)
\class["\Z_2"](4,5)
\class["\Z_2"](6,5)
\class["\Z_2"](8,5)

\class(8,7)

\end{sseqdata}

\begin{table}[htbp]
  \centering
  \printpage[
    name = E2orientedKZ2,
    grid = chess,
  ]
  \caption{The $E^2$-page of the reduced oriented bordism $\tilde \Omega^{\rm SO}_\bullet(K(\Z,2))$.}
  \label{tab:E2orientedKZ2}
\end{table}

The morphism $\chi_*$ appears on the $E^r$-pages as
\begin{equation}
    \chi_*: \bar E^r_{p,q}(K(\Z,2)) \longrightarrow  E^r_{p+1,q}(K(\Z,3))\,,
\end{equation}
and hence in particular we have the following commutative diagram
\begin{equation}
\begin{tikzcd} \label{eq:comdiag1}
\bar E^2_{4,4}(K(\Z,2)) \cong \Z  \langle  \hat{c}^2_1 \alpha \rangle \arrow[r, "\chi_*"] \arrow[d,"\bar \td^2"'] & E^2_{5,4}(K(\Z,3)) \cong \Z_2 \langle \hat \iota^2  \alpha \rangle \arrow[d,"\td^2"] \\
\bar E^2_{2,5}(K(\Z,2)) \cong  \Z_2 \langle  \hat{\bar c}_1 \omega \rangle \arrow[r,"\chi_*"'] & E^2_{3,5}(K(\Z,3)) \cong \Z_2 \langle \hat u  \omega  \rangle \,,
\end{tikzcd}
\end{equation}
where $\bar c_1$ is the mod-2 reduction of $c_1$. The generators of the various homology groups are explicitly given by using the ``hat'' to indicate the homology dual of various cohomology classes, respectively. Following Notation 15.1 in~\cite{Joyce:2025jyd}, homology products mean the homology class dual of cohomology cup products with the chosen basis, and $\alpha$ is simply $\mathbb{CP}^2$ while $\omega = SU(3)/SO(3)$ is the Wu manifold. Clearly, $\hat c_1 \mapsto \hat \iota$, $\hat{\bar c}_1 \mapsto \hat u$ and $\hat c_1^2 \mapsto \hat \iota^2$. Hence, the upper horizontal map $\chi_*$ in \eqref{eq:comdiag1} is surjective, while the lower horizontal map $\chi_*$ is an isomorphism. We know from the result $\tilde \Omega_7^{\rm SO}(K(\Z,2)) = \Z_2$ that $\bar \td^2$ vanishes, due to commutativity of the above diagram, $\td^2$ on the right must vanish as well.

For $\td^6$, the same argument as above would not work (see the remark below). Fortunately, we can proceed geometrically. By definition, $K(\Z,3)$'s CW-construction starts from the 3-cell and hence we can immediately tell
\begin{equation}
\begin{aligned}
        G_{2,6} &= {\rm Im} \left(\tilde \Omega_8^{\rm SO}(K(\Z,3)^{(2)}) \rightarrow \tilde \Omega_8^{\rm SO}((K(\Z,3))\right) \\
        &= {\rm Im} \left(\tilde \Omega_8^{\rm SO}({\rm pt}) \rightarrow \tilde \Omega_8^{\rm SO}((K(\Z,3))\right) = 0
\end{aligned}
\end{equation}
On the final page we have
\begin{equation}
    E^\infty_{3,5} \cong G_{3,5}/G_{2,6} =  G_{3,5}\,.
\end{equation}
Whether $\td^6$ is an isomorphism between $\Z_2$'s or vanishes is equivalent to whether $G_{3,5}$ is zero. It is enough to find a nonzero class in $\tilde \Omega^{\rm SO}_8(K(\Z,3))$ which lies in the image of $\tilde \Omega^{\rm SO}_8(K(\Z,3)^{(3)}) = \tilde \Omega^{\rm SO}_8(S^3)$.
Following the construction in \eqref{eq:itheinjection}, we choose the class $[S^3 \times SU(3)/SO(3), f=\varphi \circ {\rm proj}_{S^3}]$ and evaluate the bordism invariant
\begin{equation}
\begin{aligned} \label{eq:oribordisminva1}
           \Phi_1 ([ S^3 \times SU(3)/SO(3), \varphi \circ {\rm proj}_{S^3} ]) :&= \int_{S^3 \times SU(3)/SO(3)}  (\varphi \circ {\rm proj}_{S^3})^*(u) \smile w_2\smile w_3 \\
           & = \int_{S^3} \varphi^*(u)  \int_{SU(3)/SO(3)} w_2\smile w_3 = 1 \mod 2\,.
\end{aligned}
\end{equation}
$\Phi_1$ here represents the mod-2 characteristic class $u \, w_2 \,w_3$, which is obviously a bordism invariant for $\tilde \Omega^{\rm SO}_8(K(\Z,3))$. We can therefore conclude that $\td^6$ is also zero.
\end{proof}

\paragraph{Remark.} One may also want to play the same suspension trick on the $E^6$-page to determine if $\td^6$ is zero. However, this does not work due to the following cohomology computations.
Consider the commutative diagram
\begin{equation}
\begin{tikzcd} \label{eq:comdiag2}
\bar E^6_{8,0}(K(\Z,2)) \cong \Z  \langle  \hat{c}^4_1  \rangle \arrow[r, "\chi_*"] \arrow[d,"\bar \td^6"'] & E^6_{9,0}(K(\Z,3)) \cong \Z_2 \langle \hat \iota^3   \rangle \arrow[d,"\td^6"] \\
\bar E^6_{2,5}(K(\Z,2)) \cong  \Z_2 \langle  \hat{\bar c}_1 \omega \rangle \arrow[r,"\chi_*"'] & E^6_{3,5}(K(\Z,3)) \cong \Z_2 \langle \hat u  \omega  \rangle \,.
\end{tikzcd}
\end{equation}
Now for the mod-2 cohomologies, we have $\bar c^4_1 = \mathrm{Sq}^4 \mathrm{Sq}^2 \bar c_1$, but $u^3$ (recall that it is the mod-2 reduction of $\iota^3$) and $\mathrm{Sq}^4 \mathrm{Sq}^2 u$ are independent generators, they generate $H^9(K(\Z,3),\Z_2) \cong \Z_2^2$. As the induced morphisms preserve Steenrod squares, we have $\bar c^4_1 \mapsto  \mathrm{Sq}^4 \mathrm{Sq}^2 u$ instead of $u^3$. Hence, the upper $\chi_*$ in \eqref{eq:comdiag2} is zero, and we can not conclude from the commutativity of the diagram and the vanishing of $\bar \td^6$ that $\td^6=0$.

\paragraph{The extension problem for $\tilde \Omega^{\rm SO}_{8}(K(\Z,3))$.} It follows from Lemma~\eqref{lemma1} that
\begin{equation}
    \text{Gr}( \tilde \Omega^{\rm SO}_{8}(K(\Z,3))) = E^\infty_{3,5} \oplus E^\infty_{8,0} = G_{3,5} \oplus \frac{G_{8,0}}{G_{3,5}} \,,
\end{equation}
yielding another extension problem of the following short exact sequence
\begin{equation} \label{eq:seqOmega8orient}
    0 \longrightarrow   H_3(K(\Z,3),\Z_2)_{\cong \Z_2} \xrightarrow{\quad j \quad }   \tilde \Omega^{\rm SO}_{8}(K(\Z,3)) \xrightarrow{\quad q \quad } H_8(K(\Z,3),\Z)_{\cong\Z_2} \longrightarrow  0\,.
\end{equation}
Hence, $\tilde \Omega^{\rm SO}_{8}(K(\Z,3))$ is either $\Z_2 \oplus \Z_2$ or $\Z_4$ depending on whether \eqref{eq:seqOmega8orient} splits.
This can be solved geometrically by looking at the geometric representative  $[SU(3), \phi]$, where $\phi^*$ pulls back $\iota \in H^3(K(\Z,3),\Z)$ to the generator of $H^3(SU(3),\Z) \cong \Z$. Recall the generator $\Phi_2:=u\,\mathrm{Sq}^2 u$ of $H^8(K(\Z,3),\Z_2)_{\cong\Z_2}$, which is another bordism invariant.
Then (see Theorem 3.5 in~\cite{Joyce:2025jyd})
\begin{equation}
    \Phi_2([SU(3), \phi]) = \int_{SU(3)} \phi^*(u\,\mathrm{Sq}^2 u) = 1 \mod 2\,,
\end{equation}
while $\Phi_2([ S^3 \times SU(3)/SO(3), \varphi \circ {\rm proj}_{S^3} ]) = \int_{S^3} \varphi^*(u\,\mathrm{Sq}^2 u) \times 1= 0 \mod 2$.
At the same time,
\begin{equation}
    \Phi_1([SU(3), \phi]) = \int_{SU(3)} \phi^*( u ) \smile  w_2\smile w_3= 0 \mod 2\,,
\end{equation}
because $SU(3)$ is a spin manifold, on which $w_2$ vanishes (in fact, it is the geometric generator~\cite{Joyce:2025jyd} of $\tilde \Omega^{\rm Spin}_8(K(\Z,3)) \cong \Z_2$).
Recall from~\eqref{eq:oribordisminva1} that $\Phi_1([ S^3 \times SU(3)/SO(3), \varphi \circ {\rm proj}_{S^3} ])= 1 \mod 2$, and we can combine the two independent $\Z_2$-valued bordism invariants to the surjective map $(\Phi_1,\Phi_2): \tilde \Omega^{\rm SO}_{8}(K(\Z,3))\longrightarrow \Z_2 \oplus \Z_2$, which turns out to be an isomorphism. The short exact sequence \eqref{eq:seqOmega8orient} splits and this proves
\begin{theorem}
  $ \boxed{ \tilde \Omega^{\rm SO}_{8}(K(\Z,3))  \cong \Z_2 \oplus \Z_2}$\,.
  \label{theorem:3.3}
\end{theorem}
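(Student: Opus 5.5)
The plan is to use the two-step filtration supplied by Lemma~\ref{lemma1}, rather than attack the group directly. The first step is to collect the AHSS input. On the diagonal $p+q=8$ of Table~\ref{tab:E2oriented}, the reduced entries are $E^2_{3,5}\cong H_3(K(\Z,3),\Z_2)$, $E^2_{5,3}=0$, $E^2_{4,4}=0$ and $E^2_{8,0}\cong H_8(K(\Z,3),\Z)\cong\Z_2$. The $p=0$ entries belong to the unreduced summand $\Omega^{\rm SO}_8({\rm pt})$. The entry $E^2_{8,0}$ supports no outgoing differentials, because the only possible targets $E^r_{8-r,r-1}$ on the 7th diagonal vanish, except at $r=5$ where the target is $E^5_{3,4}$. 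In that case $\td^5:\Z_2\to\Z$ is zero since $\Z$ has no torsion. Lemma~\ref{lemma1} kills both possible differentials into $E_{3,5}$, namely $\td^2$ from $E_{5,4}$ and $\td^6$ from $E_{9,0}$. Hence $E^\infty_{3,5}\cong E^\infty_{8,0}\cong\Z_2$, and we get the short exact sequence \eqref{eq:seqOmega8orient}. So $\tilde\Omega^{\rm SO}_8(K(\Z,3))$ has order $4$ and is either $\Z_4$ or $\Z_2^2$.

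The second step resolves the extension by producing two independent $\Z_2$-valued bordism invariants. Since any surjection from a group of order four onto $\Z_2\oplus\Z_2$ is an isomorphism, this settles the question. The invariants I would use are $\Phi_1=\int f^*u\,w_2w_3$ and $\Phi_2=\int f^*(u\,\mathrm{Sq}^2u)$. Each is a bordism invariant because it is a characteristic number of mod-2 classes pulled back from $K(\Z,3)\times BSO$. For test manifolds I would take $[S^3\times SU(3)/SO(3),\varphi\circ{\rm proj}_{S^3}]$ and $[SU(3),\phi]$, and compute the $2\times2$ matrix of values:
\begin{itemize}
\item On the product manifold, $\Phi_1=1$ by the Künneth computation already done in \eqref{eq:oribordisminva1}. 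Also $\Phi_2=0$, because $u\,\mathrm{Sq}^2u$ pulled back to $S^3$ lands in degree $8>3$.
\item On $SU(3)$, $\Phi_1=0$ since $SU(3)$ is parallelizable and hence spin, so $w_2=0$.
\end{itemize}

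The main obstacle is $\Phi_2([SU(3),\phi])=1$. Here $\mathrm{Sq}^2$ of the degree-3 generator $x_3$ of $H^\bullet(SU(3),\Z_2)=\Lambda(x_3,x_5)$ must equal $x_5$. I would argue this via the fibration $SU(2)\to SU(3)\to S^5$, or via the quotient $\Sigma\mathbb{CP}^2\subset SU(3)$, which captures the $3$- and $5$-cells. On $\Sigma\mathbb{CP}^2$, $\mathrm{Sq}^2$ is the suspension of $\mathrm{Sq}^2c=c^2\neq0$ on $\mathbb{CP}^2$, and this is nonzero. Then $x_3x_5$ is the top class, so $\Phi_2=1$ (this is Theorem~3.5 of~\cite{Joyce:2025jyd}, which may be cited).

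With these values the matrix is the identity. Therefore $(\Phi_1,\Phi_2)$ is surjective onto $\Z_2^2$, hence an isomorphism, and the sequence splits. The two manifolds above are the geometric generators.
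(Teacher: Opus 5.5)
Your proposal is correct and follows the paper's proof essentially step for step. You use Lemma~\ref{lemma1} to obtain $E^\infty_{3,5}\cong E^\infty_{8,0}\cong\Z_2$ and the sequence \eqref{eq:seqOmega8orient}, then split it with $(\Phi_1,\Phi_2)=(u\,w_2w_3,\,u\,\mathrm{Sq}^2u)$ evaluated on $S^3\times SU(3)/SO(3)$ and $SU(3)$. The only difference is that you justify $\Phi_2([SU(3),\phi])=1$ directly via $\mathrm{Sq}^2$ on $\Sigma\mathbb{CP}^2\subset SU(3)$, where the paper simply cites Theorem~3.5 of \cite{Joyce:2025jyd}.
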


\subsection{Spin Bordism Groups}
The results for spin bordism can also be found in the recent mathematical literature~\cite{Joyce:2025jyd}, where the reduced spin bordism groups for various spaces, including $K(\Z,3)$, are computed solely using the AHSS. We nevertheless review them here.
The $E^2$-page for the spin bordism group computation is presented in Table~\ref{tab:E2spin}, obtained from the known homology groups~\eqref{eq:integralhomology} and \eqref{eq:mod2homology} together with the spin bordism groups of a point~\eqref{eq:pointbordism}.
For $i\leqslant 3$, we can easily see that
\begin{align}
  \Omega^{\rm Spin}_{i}(K(\Z,3))=\left\{\begin{array}{ll}
\Z & i =0  \\
\Z_2 \cong\Omega^{\rm Spin}_{i}(\rm pt) & i =1,2 \\
\Z \cong H_3(K(\Z,3),\Z)& i = 3 \\
\end{array}\right.\,.
\end{align}

\begin{sseqdata}[ name = E2spin, classes = { draw = none }, axes type =  frame, scale = 0.8 ]

\class["\Z"](0,0)
\class["\Z"](3,0)
\class["\Z_2"](5,0)
\class["\Z_3"](7,0)
\class["\Z_2"](8,0)
\class["\Z_2"](9,0)

\class["\Z_2"](0,1)
\class["\Z_2"](3,1)
\class["\Z_2"](5,1)
\class["\Z_2"](6,1)
\class["\Z_2"](8,1)
\class["\Z_2^2"](9,1)

\class["\Z_2"](0,2)
\class["\Z_2"](3,2)
\class["\Z_2"](5,2)
\class["\Z_2"](6,2)
\class["\Z_2"](8,2)
\class["\Z_2^2"](9,2)

\class["\Z"](0,4)
\class["\Z"](3,4)
\class["\Z_2"](5,4)
\class["\Z_3"](7,4)
\class["\Z_2"](8,4)
\class["\Z_2"](9,4)

\class["\Z^2"](0,8)
\class["\Z^2"](3,8)
\class["\Z_2^2"](5,8)
\class["\Z_3^2"](7,8)
\class["\Z_2^2"](8,8)
\class["\Z_2^2"](9,8)
\end{sseqdata}

\begin{table}[htbp]
  \centering
  \printpage[
    name = E2spin,
    grid = chess,
  ]
  \caption{The $E^2$-page of entries $E^2_{p,q} = H_p(K(\Z,3),\Omega^{\rm Spin}_q({\rm pt}))$.}
  \label{tab:E2spin}
\end{table}
For $i = p+ q = 4$, the differential
\begin{equation}  \label{eq:d2iso}
     \td^2 : (E^2_{5,0})_{=H_5(K(\Z,3),\Z) \cong \Z_2} \longrightarrow (E^2_{3,1})_{=H_3(K(\Z,3),\Z_2) \cong \Z_2}
\end{equation}
is an isomorphism. In general, the differential $\td^2 : H_p(X,  \Omega^{\rm Spin}_1({\rm pt})) \rightarrow H_{p-2}(X, \Omega^{\rm Spin}_2({\rm pt}))$ is shown~\cite{Teichner1993Signature} to be the dual $(\mathrm{Sq}^2)^*$ of $\mathrm{Sq}^2: H^{p-2}(X,  \Z_2) \rightarrow H^p(X, \Z_2)$, while the differential $\td^2 : H_p(X,  \Omega^{\rm Spin}_0({\rm pt})) \rightarrow H_{p-2}(X, \Omega^{\rm Spin}_1({\rm pt}))$ turns out to be the mod $2$ reduction composed with $(\mathrm{Sq}^2)^*$. For other known differentials, the reader may refer to~\cite{ABPspinbordismring}. In our case, we have [cf.~Eq.~\eqref{eq:mod2ring}]
\begin{equation}
    H_5(K(\Z,3),\Z) \xrightarrow{\quad \text{mod} \;2 \quad } H_5(K(\Z,3),\Z_2) \xrightarrow{\quad (\mathrm{Sq}^2)^* \quad } H_3(K(\Z,3),\Z_2) \,,
\end{equation}
\begin{equation}
    \begin{aligned}
        \mathrm{Sq}^2 :  H_3(K(\Z,3),\Z_2) &\xrightarrow{\quad \cong \quad }   H_5(K(\Z,3),\Z_2)\\
        u & \longmapsto \mathrm{Sq}^2 u \,.
    \end{aligned}
\end{equation}
Thus, $E^2_{3,1}$ is eliminated at the $E^2$-page and
\begin{equation}
     \Omega^{\rm Spin}_4(K(\Z,3)) \cong \Z  \cong \Omega^{\rm Spin}_4{(\rm pt)} \,.
\end{equation}

The isomorphism~\eqref{eq:d2iso} also indicates that $E^2_{5,0}$ is eliminated. From the first row to the second row of the $E^2$-page we have the isomorphism
\begin{equation}
     \td^2 = (\mathrm{Sq}^2)^*: (E^2_{5,1})_{=H_5(K(\Z,3),\Z_2) \cong \Z_2} \xrightarrow{\quad \cong \quad }   (E^2_{3,2})_{=H_3(K(\Z,3),\Z_2) \cong \Z_2}\,,
\end{equation}
and it follows that both $E^2_{5,1}$ and $E^2_{3,2}$ are eliminated. As a consequence, we have
\begin{equation}
     \Omega^{\rm Spin}_5(K(\Z,3)) =0 \qquad \text{and} \qquad \Omega^{\rm Spin}_6(K(\Z,3))=0 \,.
\end{equation}

\subsubsection{$\Omega^{\rm Spin}_7(K(\Z,3))$ and $\Omega^{\rm Spin}_8(K(\Z,3))$}
The entry $E^2_{7,0} = H_7(K(\Z,3),\Z) = E^\infty_{7,0}$ stabilizes to the final page as there is no nontrivial group homomorphism from $E^2_{7,0}=\Z_3$ to $E^2_{5,1}=\Z_2$.
The differential $\td^2: E^2_{8,0} \rightarrow E^2_{6,1}$ must be of the form
\begin{equation} \label{eq:triviald280}
    H_8(K(\Z,3),\Z) \xrightarrow{\quad \text{mod} \;2 \quad } H_8(K(\Z,3),\Z_2) \xrightarrow{\quad (\mathrm{Sq}^2)^* \quad } H_6(K(\Z,3),\Z_2) \,.
\end{equation}
The relevant Steenrod square has following action on the generator $u^2 \in H^6(K(\Z,3),\Z_2)$
\begin{equation}
\begin{aligned}
       \mathrm{Sq}^2 u^2 &= \mathrm{Sq}^0 u \smile \mathrm{Sq}^2 u + \mathrm{Sq}^1 u \smile \mathrm{Sq}^1 u + \mathrm{Sq}^2 u \smile \mathrm{Sq}^0 u \\
       &= 2 \times (u \smile \mathrm{Sq}^2 u) + 0 = 0 \mod 2\,.
\end{aligned}
\end{equation}
The vanishing of this differential leads to $E^\infty_{6,1} = E^2_{6,1}$. We can move on to the next nonvanishing entry on the $p+q=7$ diagonal, that is $E^2_{5,2}$, it survives obviously to the $E^3$-page.

By Theorem 3.5 of~\cite{Joyce:2025jyd}, the following differential from $E^3_{8,0}$ to $E^3_{5,2}$ vanishes:
\begin{equation} \label{eq:triviald380}
     \td^3 : (E^3_{8,0})_{=H_8(K(\Z,3),\Z) \cong \Z_2} \longrightarrow (E^3_{5,2})_{=H_5(K(\Z,3),\Z_2) \cong \Z_2}\,.
\end{equation}
On the other hand, the other differential relevant for $\Omega^{\rm Spin}_8(K(\Z,3))$ is an isomorphism:
\begin{equation} \label{eq:isod390}
       \td^3 : (E^3_{9,0})_{=H_9(K(\Z,3),\Z) \cong\Z_2} \longrightarrow (E^3_{6,2})_{=H_6(K(\Z,3),\Z_2) \cong\Z_2}\,.
\end{equation}
It follows immediately that $\tilde \Omega^{\rm Spin}_8(K(\Z,3)) = E^\infty_{8,0} \cong \Z_2$.

As for $\tilde \Omega^{\rm Spin}_7(K(\Z,3))$, we note that in the proof of Theorem 3.5 the authors used the suspension trick relating the spin bordism groups of $K(\Z,3)$ to the results~\cite{Stong} of $K(\Z,4)$
\begin{equation} \label{eq:spinbordismokfKZ4}
    \begin{tabular}{c|cccccccccc}
  $n$ & $0$ & $1$ & $2$ & $3$ & $4$ & $5$ & $6$ & $7$ & $8$  & $9$ \\ \midrule
$\tilde \Omega^{\rm Spin}_n(K(\Z,4))$ & $0$ & $0$ & $0$ & $0$ & $\Z$ & $0$ & $0$ & $0$ & $ \Z^2$ & $\Z_2$
\end{tabular}
\end{equation}
In this case, the suspension trick offers even more than just the determination of some differentials, it remarkably tells us how the filtration is formed from $E^\infty_{7,0}=\Z_3$, $E^\infty_{6,1}=\Z_2$,  $E^\infty_{5,2}=\Z_2$ and $E^\infty_{3,4}=\Z$. That is,
\begin{equation} \label{eq:filtrationSpinKZ}
0\subset G_{3,4} \cong \Z \subset G_{5,2} \cong \Z \subset G_{6,1}\cong \Z\subset G_{7,0}
=\tilde \Omega^{\rm Spin}_7(K(\Z,3)) \cong \Z\,,
\end{equation}
with the graded structure
\begin{equation}
\begin{aligned}
\text{Gr}\big(\tilde \Omega^{\rm Spin}_7(K(\Z,3))\big)
&= E^\infty_{3,4}\oplus E^\infty_{5,2}\oplus E^\infty_{6,1}\oplus E^\infty_{7,0} \\
&=G_{3,4}\oplus \frac{G_{5,2}}{G_{3,4}}\oplus \frac{G_{6,1}}{G_{5,2}}\oplus \frac{G_{7,0}}{G_{6,1}} \\
&\cong \Z\oplus \Z_2\oplus \Z_2\oplus \Z_3.
\end{aligned}
\end{equation}

\section{Physical Examples of Anomalies for $U(1)$ 1-form Symmetries}
\label{sec:physics}

In this section, we discuss physical examples with different types of anomalies for $U(1)$ 1-form symmetries in bosonic and fermionic theories in various dimensions. We do not include time-reversal symmetry, for which one would instead use unoriented or pin bordism. For convenience, we recall the result of bordism groups from Table~\ref{tab:bordism-KZ3-summary}:
\begin{equation}
    \begin{tabular}{c|ccccccccc} \label{eq:tab1informula}
  $n$ & $0$ & $1$ & $2$ & $3$ & $4$ & $5$ & $6$ & $7$ & $8$  \\ \midrule
$\Omega^{\rm SO}_n(K(\Z,3))$ & $\Z$ & $0$ & $0$ & $\Z$ & $\Z$ & $\Z_2^2$ & $0$ & $ \Z$ & $  \Z_2^2  \oplus \Z^2 $   \\  \midrule
$\Omega^{\rm Spin}_n(K(\Z,3))$ & $\Z$ & $\Z_2$ & $\Z_2$ & $\Z$ & $\Z$ & $0$ & $0$ & $\Z$ & $ \Z_2  \oplus \Z^2$
\end{tabular}
\end{equation}
Using the universal coefficient theorem~\eqref{eq:UCTforIOMEGA}, we can read off the values of the invertible phases $(I_{\Z} \Omega^{\mathcal{S}})^{n}(K(\Z,3))$ as follows:\footnote{Although we do not determine the groups $\Omega^{\mathrm{SO}/\mathrm{Spin}}_{9}(K(\Z,3))$, the homology and the AHSS immediately show that they are purely torsion. Hence, $\text{Hom}_{\Z}( \Omega^{{\rm SO/Spin}}_{9}(K(\Z,3)),\Z) = 0$. This is sufficient to determine $(I_{\Z}\Omega^{\mathrm{SO}/\mathrm{Spin}})^9(K(\Z,3))$, since it receives contributions only from the torsion subgroup of $\Omega^{\mathrm{SO}/\mathrm{Spin}}_{8}(K(\Z,3))$.}
\begin{equation}
\label{eq:invphase-KZ3-physics}
    \begin{tabular}{c|cccccccccc}
  $n= d+2$ & $0$ & $1$ & $2$ & $3$ & $4$ & $5$ & $6$ & $7$ & $8$ & $9$ \\ \midrule
$(I_{\Z} \Omega^{\rm SO})^{n}(K(\Z,3))$ & $\Z$ & $0$ & $0$ & $\Z$ & $\Z$ & $0$ & $\Z_2^2$ & $ \Z$ & $   \Z^2 $ & $  \Z_2^2  $  \\  \midrule
$(I_{\Z} \Omega^{\rm Spin})^{n}(K(\Z,3))$ & $\Z$ & $0$ & $\Z_2$ & $\Z_2\oplus\Z$ & $\Z$ & $0$ & $0$ & $\Z$ & $  \Z^2$ & $  \Z_2  $
\end{tabular}
\end{equation}
When applying the result to anomalies of theories in $d$ spacetime dimensions, one needs to look for invertible phases at the value $n= d+2$. Note that for both $\mathcal S=\mathrm{SO}$ and $\mathcal S=\mathrm{Spin}$, the results in \eqref{eq:tab1informula} show that, for $n\leqslant 4$, the only difference between $\Omega^{\mathcal S}_n(\mathrm{pt})$ and $\Omega^{\mathcal S}_n(K(\Z,3))$ occurs at $n=3$. This reflects the appearance of the Dixmier--Douady class $\iota$ when a $U(1)$-gerbe structure is required.

\subsection{Anomalies of Lower Dimensional Theories ($d\leqslant4$)}

Restricting to $d\leqslant 4$, we identify the corresponding anomalies and discuss, dimension by dimension, both their standard interpretations and some new perspectives. The pure gravitational anomaly parts agree with the corresponding mathematical and physical results, for example \cite{wall1960determination,ABPspinbordismring,Kapustin:2014tfa,Kapustin:2014dxa,Stong}.

\begin{itemize}
    \item $d=0$, $(I_{\Z} \Omega^{\rm SO})^2(K(\Z,3)) = (I_{\Z} \Omega^{\rm SO})^2(\rm pt)) \cong 0$. There is no anomaly from the orientation. When a spin structure is imposed, $(I_{\Z} \Omega^{\rm Spin})^2(K(\Z,3)) = (I_{\Z} \Omega^{\rm Spin})^2(\rm pt)) \cong \Z_2$. There is only a discrete $\Z_2$ anomaly associated with the spin structure (or fermion parity). Concisely, the corresponding bordism group  $\Omega^{\rm Spin}_{1}(K(\Z,3)) \cong  \Omega^{\rm Spin}_1(\rm pt) \cong \Z_2$ is generated by $S^1$ with periodic spin structure (R) and detected by the $\eta$-invariant
    \begin{equation}
        \eta(S_{R}^1) = \frac{1}{2}\,.
    \end{equation}

    \item $d=1$,  $(I_{\Z} \Omega^{\rm SO})^3(K(\Z,3)) \cong \Z$. Consider a 1-dimensional theory with background $U(1)$ gauge field $A$ and with partition function on a closed oriented 1-manifold $M_1$ (finite disjoint union of $S^1$'s) given as
\begin{equation}
    Z_k(M_1,A)=\exp\!\left(2\pi \I \, k\oint_{M_1} A\right),
\qquad k\in \Z\,.
\end{equation}
The theory depends only on the holonomy of $A$, and hence it is topological. Under the 1-form symmetry $A\mapsto A+\lambda$, with $\lambda$ flat, the partition function transforms by $\exp(2\pi \I  k\oint_{M_1}\lambda)$, which does not automatically vanish, because a flat $U(1)$ connection $\lambda$ can still have nontrivial holonomy. Hence, the theory carries a $\Z$-valued 1-form anomaly labeled by $k$.

The ordinary $U(1)$ gauge field $A$ is a connection on a line bundle, whose topological class is classified by $BU(1)=K(\Z,2)$. Since $\Omega^{\rm SO}_1(K(\Z,2))=0$~\cite{Wang:2018qoy}, we may choose an oriented surface $\Sigma_2$ with $\partial\Sigma_2=M_1$ and extend the line bundle with connection over $\Sigma_2$. Then, using the curvature $F=\td A$ of the extended connection, the action can be written as
\begin{equation}
    Z_k(M_1,A)=\exp\!\left(2\pi \I \, k\int_{\Sigma_2}F\right)\,.
\end{equation}
To cancel the variation under the 1-form symmetry, we introduce the inflow term
\begin{equation}
    \exp\!\left(2\pi \I \, k\int_{\Sigma_2} B\right)\,,
\end{equation}
where $B$ is the background 2-form gauge field for the $U(1)$ 1-form symmetry, with $\delta B= -\td \lambda$. From \eqref{eq:tab1informula} we see that $\Omega^{\rm SO}_2(K(\Z,3))=0$, and thus the locally defined expression $\int_{\Sigma_2} B$ can be made well-defined by extending to a 3-manifold  $Y_3$ with $\partial Y_3=\Sigma_2$ and
\begin{equation}
    \exp\!\left(2\pi \I \, k\int_{\Sigma_2} B\right) = \exp\!\left(2\pi \I \, k\int_{Y_3} H\right)\,,
\end{equation}
with the field strength $H=\td B$. In the expression, the cohomology class of $H$ is the pullback of $\iota \in H^3(K(\Z,3),\Z)$ to $H^3(Y_3,\Z)$, and the integer $k$ is precisely the $\Z$-valued anomaly coefficient, matching $(I_{\Z}\Omega^{\mathrm{SO}})^3(K(\Z,3))\cong \Z$. This is the 1-dimensional example of the descent mechanism for 1-form symmetry anomalies.

In the spin case, we have one more $\Z_2$ factor and $(I_{\Z} \Omega^{\rm Spin})^3(K(\Z,3)) \cong \Z_2 \oplus \Z$. This extra $\Z_2$ originates from $\Omega^{\mathrm{Spin}}_2(\mathrm{pt})\cong \Z_2$, and corresponds to the fermion-parity anomaly in one dimension. It is detected by the \emph{Arf invariant} on a closed $2d$ spin manifold, which also determines the phase of the partition function of the Kitaev chain.

\item $d=2$,  $(I_{\Z} \Omega^{\rm SO})^4(K(\Z,3)) \cong \Z \cong (I_{\Z} \Omega^{\rm Spin})^4(K(\Z,3))$, and in both cases one only finds pure gravitational anomalies with the following geometric bases and dual bases (bordism invariants) for the bordism groups $\Omega^{\rm SO}_4(K(\Z,3))  \cong \Omega^{\mathcal{S}}_4(\rm pt) $:
\begin{equation}
\begin{aligned} \label{eq:4dbordism}
    \int_{\mathbb{CP}^2} \frac{1}{3} p_1 &= 1 \qquad \text{for oriented bordism}\,, \\
     \int_{K3} -\frac{1}{48} p_1 &= 1 \qquad \text{for spin bordism}\,.
\end{aligned}
\end{equation}

\item $d=3$,  $(I_{\Z} \Omega^{\rm SO})^5(K(\Z,3)) \cong 0 \cong (I_{\Z} \Omega^{\rm Spin})^5(K(\Z,3))$, and so there is no anomaly.

\item $d=4$, $(I_{\Z} \Omega^{\rm SO})^6(K(\Z,3)) \cong \Z_2^2 $, coming from the torsion part of
\begin{equation}
     \Omega^{\rm SO}_5(K(\Z,3)) \cong \Omega^{\rm SO}_5({\rm pt})  \oplus \tilde  \Omega^{\rm SO}_5(K(\Z,3))   \cong \Z_2 \oplus \Z_2 \,.
\end{equation}
Recall that $\Omega^{\rm SO}_5({\rm pt})$ is generated by the Wu manifold $SU(3)/SO(3)$ and detected by the cup product of Stiefel--Whitney classes
\begin{equation}
    w_2 \smile w_3\,.
\end{equation}
It represents a discrete global gravitational anomaly for $4d$ theories defined on oriented manifolds, without requiring any additional structure. For discussions of this anomaly theory and its $4d$ boundary states, see~\cite{Kapustin:2014tfa,Thorngren:2014pza,Kravec:2014aza,Wang:2018qoy}.

The reduced bordism class gives another $\Z_2$-valued discrete anomaly,
\begin{equation}
    w_2 \smile u \,,
\end{equation}
which is a mixed anomaly between gravity and the $U(1)$ 1-form symmetry. In particular, it obstructs gauging the 1-form symmetry for theories on oriented $4d$ manifolds. A candidate $4d$ boundary theory for this phase is the fermionic monopole phase of Maxwell theory, with coupling
\begin{equation}
   (-1)^{\int_{M_4} w_2\smile \bar c_1}
=
\exp\!\left(\I\pi\int_{M_4} w_2\smile \bar c_1\right) \,,
\end{equation}
 where $\bar c_1$ denotes the mod-2 reduction of the first Chern class of the electromagnetic field. This term forces all monopoles to be fermionic~\cite{Kravec:2014aza,Jian:2020qab}. In the next section, we will derive an analogous statement for magnetic strings in $5d$ theories.  The vanishing
$ (I_{\Z}\Omega^{\mathrm{Spin}})^6(K(\Z,3))\cong 0$
reflects the same phenomenon: on spin manifolds one has $w_2=0$, and thus both the anomaly classes $w_2\smile w_3$ and  $w_2 \smile u$ disappear.

\end{itemize}

\paragraph{$4d$ Maxwell theory on an oriented manifold.}
A special feature of $4d$ Maxwell theory is that the magnetic $U(1)$ symmetry is itself a 1-form symmetry. It is therefore natural to consider the classifying space for both the electric and magnetic $U(1)$ 1-form symmetries, i.e.,
$K(\Z,3)\times K(\Z,3)$.
The oriented bordism groups in $5d$ and $6d$ were computed in~\cite{Davighi:2023luh}, in the context of viewing this as a weak 2-group whose symmetry is purely 1-form:
\begin{equation}
\Omega^{\rm SO}_5(K(\Z,3)\times K(\Z,3)) \cong \Z_2^3\,,
\qquad
\Omega^{\rm SO}_6(K(\Z,3)\times K(\Z,3)) \cong \Z\,.
\end{equation}
Let $\tilde u$ denote the mod-2 reduction of the Dixmier--Douady class $\tilde \iota$ for the background magnetic 2-form gauge field. Compared with $\Omega^{\rm SO}_5(K(\Z,3))$, the extra $\Z_2$-summand in
$\Omega^{\rm SO}_5(K(\Z,3)\times K(\Z,3))$
is precisely detected by
\begin{equation} \label{eq:maganomaly4d}
w_2 \smile \tilde u \,.
\end{equation}
Moreover,
$\Omega^{\rm SO}_6(K(\Z,3)\times K(\Z,3))\cong \Z$
encodes the usual mixed ’t Hooft anomaly between the two $U(1)$ 1-form symmetries with field strengths (fluxes) $H_E$ and $H_M$, described by the anomaly polynomial $\sim H_E\wedge H_M$, in the case of $4d$ Maxwell theory~\cite{Gaiotto:2014kfa}. This is precisely the anomaly obtained in \cite{PartI} from the BV--BRST descent for the two gerbe background fields. Here the bordism computation recovers the same free class and places it in a more complete classification that also includes the torsion part, and hence non-perturbative anomalies.

With the discussions above, we get the well-known four different phases of $4d$ Maxwell theory associated to its total 1-form symmetries:
\begin{equation}
   2\pi \I  \int_{M_6} \iota \smile \tilde \iota \,,\qquad \pi \I  \int_{M_5} w_2 \smile w_3 \,,\qquad\pi \I  \int_{M_5} w_2 \smile u \,,\qquad \pi \I  \int_{M_5} w_2 \smile \tilde u \,.
\end{equation}
These correspond, respectively, to ordinary electrodynamics, all-fermion electrodynamics, electrodynamics with a fermionic monopole, and electrodynamics with a fermionic probe charge~\cite{Jian:2020qab} (see Figure~1 in~\cite{Davighi:2023luh} for a pictorial summary).

For completeness, we also record the corresponding spin bordism groups. Their computation is straightforward, since in these low dimensions the AHSS presents no serious difficulty:
\begin{equation}
\Omega^{\rm Spin}_5(K(\Z,3)\times K(\Z,3)) \cong 0\,,
\qquad
\Omega^{\rm Spin}_6(K(\Z,3)\times K(\Z,3)) \cong \Z\,.
\end{equation}
Thus, the mixed ’t Hooft anomaly persists in the spin case, whereas the vanishing of $w_2$ eliminates the degree-5 $\Z_2$-anomaly.

\paragraph{Pure gravitational anomalies for $d=6$.} Although the case $d=6$ lies in higher dimension, we include it here for completeness. No new anomaly involving the 1-form symmetry appears in this case, as $(I_{\Z} \Omega^{\rm SO/Spin})^{8}(K(\Z,3)) \cong \Z^2$.
Both $\Z$ factors arise entirely from the bordism groups $\Omega^{\rm SO/Spin}_{8}(\rm pt) \cong \Z^2 $, and therefore correspond to ordinary pure gravitational anomalies.

In $8d$, a standard choice of geometric generators for $\Omega^{\mathrm{SO}}_{8}(\mathrm{pt})$ is given by~\cite{MilnorStasheff1974}
\begin{equation}
  \mathbb{CP}^4 \qquad \text{and} \qquad \mathbb{CP}^2 \times \mathbb{CP}^2 \,.
\end{equation}
The dual basis detecting these generators is given by the anomaly polynomials as
\begin{equation}
   \int_{\mathbb{CP}^4} \frac{1}{5}(p_1^2 - 2 p_2)=  \int_{\mathbb{CP}^4} (p_2 - 9 L_2)= 1
   \end{equation}
   and
   \begin{equation}
   \int_{\mathbb{CP}^2 \times \mathbb{CP}^2} \frac{1}{9}(-2p_1^2+5p_2)=\int_{\mathbb{CP}^2 \times \mathbb{CP}^2} ( 10 L_2 - p_2) =1 \,,
\end{equation}
where
\begin{equation}
    L_2=\frac{1}{45}(7p_2-p_1^2)
\end{equation}
denotes the degree-8 part of the Hirzebruch $L$-genus, satisfying
\begin{equation}
   \sigma(M)=\int_{M_8} L_2
\end{equation}
by the signature theorem.

When spin structure is imposed, we have $\Omega^{\mathrm{Spin}}_{8}(\mathrm{pt}) \cong \Z^2$, generated by
\begin{equation}
  \mathbb{HP}^2 \qquad \text{and} \qquad B_8 \,.
\end{equation}
Here $\mathbb{HP}^2$ denotes the quaternionic projective plane, while $B_8$~\cite{Saito:2025idl} is the Bott manifold  satisfying $\int_{B_8}\hat A_2 = 1$ and $p_1=0$.
Recall that the degree-8 part of the $\hat A$-genus
\begin{equation}
\hat A_2=\frac{1}{5760}(7p_1^2-4p_2)\,.
\end{equation}
By the index theorem for the standard Dirac operator $D$, its integral over any closed spin 8-manifold $M_8$ is an integer
\begin{equation}
\text{Ind}(D,M_8)=\int_{M_8} \hat A_2 \in \Z\,.
\end{equation}
We further note that the generators $\mathbb{HP}^2$ and $B_8$ are detected by the ordinary Dirac operator $D$ and by the Rarita--Schwinger operator $D_{\rm RS}$, respectively.

\subsection{$5d$ Theories and a 7-form Anomaly Polynomial}

For $5d$ theories with a $U(1)$ 1-form symmetry, the groups of invertible phases are remarkably torsion free, $(I_{\Z} \Omega^{\rm SO})^{7}(K(\Z,3)) \cong \Z \cong (I_{\Z} \Omega^{\rm Spin})^{7}(K(\Z,3)) $. These are obtained from the bordism groups $ \Omega^{\rm SO}_7(K(\Z,3)) \cong \Z \cong \Omega^{\rm Spin}_7(K(\Z,3))$.

\subsubsection{Geometric Generators and the Dual Basis of the Bordism Groups.} 
We first recall the result for the spin case. Let $(r,s) \in Sp(1) \times Sp(1)$ act on $(A,q) \in Sp(2) \times Sp(1)$ from the right as
\begin{equation}
   \left( A \begin{pmatrix} r & 0 \\ 0 & s \end{pmatrix},\; s^{-1} q s \right).
\end{equation}
Theorem 3.5 of~\cite{Joyce:2025jyd} states that the quotient manifold $X_7 =  \left(Sp(2) \times Sp(1)\right) / \left(Sp(1) \times Sp(1) \right)$ is a closed spin 7-manifold equipped with a $K(\Z,3)$-structure (i.e., a map $f: X_7 \rightarrow K(\Z,3)$), such that its bordism class generates $\Omega^{\rm Spin}_7(K(\Z,3))= \Z$.

Moreover, the normalization of the bordism invariant is such that
\begin{equation} \label{eq:7spinbordisminv}
    \int_{X_7} \frac{1}{4} H_3 \wedge p_1 = 1 \,.
\end{equation}
Here and throughout this subsection, we write $H_3$ for a representative of the class $f^*(\iota)$ and replace cup products by wedge products, following the common physics convention, where $H_3$ denotes the $H$-flux of the $B$-field.

In fact, $X_7$ is another example of an $S^3$-bundle over $S^4$. However, unlike Milnor’s construction $M_{1,1}$ in Section~\ref{sec:Milnor}, this bundle has structure group $SO(3)$, rather than $SO(4)$. Precisely,
\begin{equation}
    X_7=\frac{Sp(2)\times Sp(1)}{Sp(1)\times Sp(1)}
\cong Sp (2)\times_{Sp(1)\times Sp(1)} Sp(1)\,,
\end{equation}
where $(r,s)\in Sp(1)\times Sp(1)$ acts on the fiber $Sp(1)\cong S^3$ via conjugation,
$q\mapsto s^{-1}qs$.
The base manifold is
$Sp(2)/(Sp(1)\times Sp(1))\cong \mathbb{H}\mathbb{P}^1\cong S^4$.

Now we turn to the oriented case, where the normalization of the bordism invariant is fixed as
\begin{equation} \label{eq:7orientedbordisminv}
    H_3 \wedge p_1 \,.
\end{equation}
The reason can be explained by comparing the AHSSs as follows:
\begin{itemize}
    \item In the spin case, from the filtration, one has
\begin{equation}
\text{Gr}\big(\Omega^{\mathrm{Spin}}_7(K(\Z,3))\big) =
G_{3,4}\oplus \frac{G_{5,2}}{G_{3,4}}\oplus \frac{G_{6,1}}{G_{5,2}}\oplus \frac{G_{7,0}}{G_{6,1}}
\cong \Z\oplus \Z_2\oplus \Z_2\oplus \Z_3\,,
\end{equation}
and the total group $ \Omega^{\mathrm{Spin}}_7(K(\Z,3))$ is $\Z$. Hence, if $X$ is a generator of the total spin group, then
$G_{3,4}=12\Z\, X$.
    \item 	In the oriented case, only $E^\infty_{3,4}\cong \Z$ and $E^\infty_{7,0}\cong \Z_3$ occur, so the graded structure is
    \begin{equation}
\text{Gr}\big(\Omega^{\mathrm{SO}}_7(K(\Z,3))\big) =
G_{3,4}\oplus \frac{G_{7,0}}{G_{3,4}}
\cong \Z \oplus \Z_3\,,
\end{equation}
    and if $Y$ is a generator of the total group $\Omega^{\mathrm{SO}}_7(K(\Z,3)) \cong\Z$, then
$G_{3,4}=3\Z\,Y$.

 \item  The forgetful map on the $E^\infty_{3,4}$-piece is induced by
$\Omega_4^{\mathrm{Spin}}(\rm pt)\to \Omega_4^{SO}(\rm pt)$,
which is multiplication by $-16$ [cf.~Eq.~\eqref{eq:4dbordism}].
So
$ 12 X \mapsto -16\times (3 Y)= - 48 Y$,
and hence
$X\mapsto - 4Y$. Note that the minus sign denotes taking the orientation reversal of the manifold.
 \item For $X= X_7$ or $M_{1,1}$, we already have $\int_{X} \frac{1}{4} H_3 \wedge p_1 = 1$, it follows that $\int_{Y}  H_3 \wedge p_1 =- 1.$

\end{itemize}
The next step is to construct a geometric generator $Y$. Consider oriented rank-4 real vector bundles $E$ over $\mathbb{CP}^2$, Grove and Ziller~\cite{Ziller} proved that the isomorphism class is determined by the Stiefel--Whitney class  $ w_2(E)\in H^2(\mathbb{CP}^2,\Z_2)$, the Euler class $e(E) \in H^4(\mathbb{CP}^2,\Z)$, and the Pontryagin class $p_1(E)\in H^4(\mathbb{CP}^2,\Z)$. Let $h \in H^2(\mathbb{CP}^2,\Z) \cong \Z$ be the positive generator so that $H^{4}(\mathbb{CP}^{2},\Z)=\Z\,h^{2}$ and writing $e(E)=m\,h^{2}, p_{1}(E)=n\,h^{2}$.
The allowed values are
\begin{equation}
    \begin{cases}
w_{2}(E)=0\,,\quad n\equiv 2m \pmod 4\,,\\
w_{2}(E)\neq 0\,,\quad n\equiv 2m+1 \pmod 4\,.
\end{cases}
\end{equation}

We pick a specific $E$ with
\begin{equation}
    w_{2}(E)=0\,,\qquad e(E)=0\,,\qquad p_{1}(E)=-4 h^{2} \,,
\end{equation}
and let $Y$ be its associated unit sphere bundle, i.e., we replace the fiber $\R^4$ of $E$ by the unit sphere $S^3$. Then $ Y$ is a closed 7-manifold with $K(\Z,3)$-structure, and $\pi:  Y \rightarrow \mathbb{CP}^2$ is a fibration with fiber $S^3$. By construction, we have
\begin{itemize}
    \item $T Y \oplus \epsilon_1 \cong \pi^*T\mathbb{CP}^2 \oplus \pi^* E$, where $\epsilon_1$ is the trivial $\R$-bundle over $\mathbb{CP}^2$. Then, for $w_2(T\mathbb{CP}^2) \neq 0$, $w_2(T Y)=\pi^*\big(w_2(E)+w_2(T\mathbb{CP}^2)\big) = \pi^*w_2(T\mathbb{CP}^2)\neq 0$. Thus, $Y$ is oriented but non-spin;
    \item the Gysin long exact sequence for $\pi$
\begin{equation*}
   0\;\longrightarrow\; H^{3}( Y,\Z)\;\xrightarrow{\ \pi_!\ }\; H^{0}(\mathbb{CP}^2,\Z)\;\xrightarrow{\ e(E) \wedge \ }\; H^{4}(\mathbb{CP}^2,\Z)\;\xrightarrow{\ \pi^*\ }\; H^{4}( Y,\Z)\;\longrightarrow\; 0\,,
\end{equation*}
    where the map $\pi_!$ is \emph{integration along the fiber} and in our case it is the evaluation of $H_3$ on the fiber $S^3$ which we fix to be 1. The two $0$'s on both ends are just
    $H^{3}(\mathbb{CP}^2,\Z)=0$ and $H^{1}(\mathbb{CP}^2,\Z)=0$. We have the extra condition that $e(E)=0$, and it then follows
    \begin{equation}
    \begin{aligned}
               \pi_!&:H^{3}( Y,\Z)\xrightarrow{\quad\cong \quad} H^{0}(\mathbb{CP}^2,\Z) \cong \Z\,, \\
\pi^*&:H^{4}(\mathbb{CP}^2,\Z)\xrightarrow{\quad\cong\quad}H^{4}( Y,\Z) \cong \Z \,.
    \end{aligned}
    \end{equation}
With the above discussion at hand, we can compute
\begin{equation}
\begin{aligned}
     \int_{ Y}  H_3 \wedge p_1(T Y) &= \int_{ Y} H_3 \wedge \pi^*\bigl(p_1(T\mathbb{CP}^2)+p_1(E)\bigr) \\
&= \int_{ Y} H_3 \wedge\pi^*(3 h ^2-4 h^2)\\
&=- \int_{ Y} H_3 \wedge\pi^*(h^2) = -1\,.
\end{aligned}
\end{equation}
\end{itemize}
Hence, $Y$ is the desired geometric generator of $ \Omega^{\rm SO}_7(K(\Z,3))$ with dual basis $ H_3 \wedge p_1$. This is consistent with the cohomology AHSS computation in Footnote~\ref{footnote:cohomologyAHSS}.

\subsubsection{Green--Schwarz Mechanism and Magnetic Strings}

As discussed in detail in \cite{PartI}, the anomaly polynomial $I_7 = H_3 \wedge p_1$ has the factorized form,\footnote{\label{ft:normalizationofH3p1}In the spin case, for the Green--Schwarz coupling one may take $I_7=H_3\wedge \frac12 p_1$, since $\frac12 p_1$ is an integral class on spin manifolds. This should be distinguished from the primitive spin bordism normalization represented by $\frac14 H_3\wedge p_1$.}
and thus admits a Green--Schwarz mechanism interpretation. We briefly review this below, and then explain an important physical effect of this anomaly in the presence of a magnetic string.

For $5d$ models, we demand them to contain the $U(1)$ (or a $U(1)$ factor of) gauge field $A_1$, and its Wilson lines are charged under the 1-form symmetry. We exclude any dynamical electrically charged particles under $A_1$, such that the $U(1)$ 1-form symmetry is preserved.
The electric global 1-form symmetry acts on $A_1$ by a shift by a flat connection $\lambda_1$:
\begin{equation}
    A_1 \rightarrow A_1 + \lambda^{\text{flat}}_{1} \,,
\end{equation}
with $\td \lambda^{\text{flat}}_{1} = 0$. Gauging this global 1-form means relaxing the flatness condition for $\lambda_1$ by introducing a $U(1)$-gerbe field $B_2$ and replacing the curvature 2-form $F_2=\td A_1$ by the \emph{fake} curvature
\begin{equation}
    \tilde F_2 = F_2 + B_2\,.
\end{equation}
With this modification we have the new Bianchi identity at the kinematic level
\begin{equation}
    \td \tilde F_2 = H_3 = \td B_2\,.
\end{equation}
The fake curvature $\tilde F$ is invariant under the \emph{1-form gauge transformation}
\begin{equation}
\label{eq:1formgaugetrf}
         A_1 \rightarrow A_1 + \lambda_1\,,  \qquad B_2\rightarrow B_2- \td\lambda_1 \,,
\end{equation}
for an arbitrary connection $\lambda_1$. In $5d$, we would also like to invoke a Green--Schwarz type topological coupling (with the conventional notation for the 4-form $X_4 := p_1$ or $\frac{1}{2}p_1$)
\begin{equation} \label{eq:5dGS1}
   \int_{M_5} - A_1 \wedge X_4\qquad \text{or}\qquad
   \int_{M_5} - F_2 \wedge \mathrm{CS}^{\rm grav}_3\,,
\end{equation}
with the properly normalized gravitational Chern--Simons term $\mathrm{CS}^{\rm grav}_3$ such that $\td \mathrm{CS}^{\rm grav}_3 = X_4$.
Alternatively, if there exists a $M_6$ such that $\partial M_6=M_5$, we can express this coupling as a bulk Wess--Zumino term
\begin{equation} \label{eq:6dWZ}
    \int_{M_6} - F_2 \wedge X_4\,,
\end{equation}
which avoids the caveat that $A_1$ or $\mathrm{CS}^{\rm grav}_3$ is defined only locally. In either way, we couple a background magnetic gauge field (whose field strength is $X_4$) to the magnetic 2-form symmetry. As a dynamical consequence, we obtain the equations of motion
\begin{equation}
    \td \star \tilde F_2 = X_4\,.
\end{equation}
We can now consider the variation
\begin{equation}
    \begin{aligned}
       \delta \int_{M_5}- A_1 \wedge X_4 &=   \int_{M_5} -\lambda_1 \wedge X_4 \\
       &=  \int_{M_6} -\td \lambda_1 \wedge X_4 =  \int_{M_6} \delta B_2 \wedge X_4= \delta \int_{M_7} H_3 \wedge X_4 \,,
    \end{aligned}
\end{equation}
where we used the gauge transformation for the gerbe field. Hence, the nontrivial gauge variation of \eqref{eq:5dGS1} under \eqref{eq:1formgaugetrf} corresponds to the anomaly polynomial $I_7$ that we found by bordism computations.

\paragraph{Magnetic strings carry extra topological structure.}
The physical consequence of the bulk coupling~\eqref{eq:6dWZ} can be analyzed in the spirit of~\cite{Wang:2018qoy} (see also discussions in~\cite{Thorngren:2014pza,Cordova:2018acb,Ang:2019txy,Davighi:2023luh}). Concretely, we would like to add to the $5d$ theory a magnetic string, which is the analog of a magnetic monopole in $4d$ and has one spatial dimension.
Let $\Sigma \subset M_5$ be the worldsheet of the magnetic string and we assign magnetic charge 1 to the string. This is achieved by requiring
\begin{equation}
    \td F_2 = 2\pi \delta_{\Sigma}\,,
\end{equation}
where $\delta_{\Sigma}$ is the Poincaré dual to $\Sigma$. In the presence of this magnetic string, we effectively need to excise a tubular neighborhood $N(\Sigma)$ of the string worldsheet from $M_5$ to obtain a 5-dimensional manifold $M_5'$ with boundary $\partial M_5' \neq0$.
Moreover, let $\Theta$ be the unit sphere bundle of the normal bundle of $\Sigma$ inside $M_5$, then $S^2 \rightarrow\Theta \rightarrow \Sigma$ is also a $S^2$-bundle over $\Sigma$ by construction and
\begin{equation}
    \partial M_5' = \Theta\,.
\end{equation}
The magnetic charge of the string is computed as integral on the fiber $S^2$
\begin{equation} \label{eq:magcharge}
    \int_{S^2} \frac{F_2}{2\pi} = 1\,.
\end{equation}

Since we replaced the old closed $5d$ manifold $M_5$ by a new manifold $M_5'$ with boundary, we need to make sense of the couplings~\eqref{eq:5dGS1} and \eqref{eq:6dWZ} with the new data. Although, as mentioned, the two differential form expressions in \eqref{eq:5dGS1} already fail to be globally defined on $M_5$, we can still talk about a genuine $5d$ coupling using \emph{differential cohomology}.\footnote{For physics-oriented accounts of differential cohomology, see~\cite{Cordova:2019jnf,Hsieh:2020jpj,GarciaEtxebarria:2024fuk}. The need for a global definition of TQFT actions on arbitrary spacetime manifolds was already discussed for $3d$ Chern--Simons theory in~\cite{Dijkgraaf:1989pz}, in different terminology.}

Rigorously speaking, we start with two differential cohomology classes: $\check{A}_1 \in \check{H}^2(M_5)$, representing the original $U(1)$ gauge field, and $\check{C}_3 \in \check{H}^4(M_5)$, which serves as the gravitational Chern--Simons field whose field strength is $X_4$. The normalization of $H_3 \wedge p_1$ is essential here, because a differential cohomology refinement can only be defined for an integral class. Hence, one takes $X_4 = p_1$ for oriented non-spin manifolds, and $X_4 = \tfrac{1}{2}p_1$ for spin manifolds, as explained in Footnote~\ref{ft:normalizationofH3p1}.

There is a graded-commutative product on differential cohomology elements. In the present case, we have $\check{A}_1 \cdot \check{C}_3 \in \check{H}^6(M_5) \cong \R / \Z$. Let us denote this on $M_5$ as the pairing
\begin{equation} \label{eq:diffcohopairing}
    ( \check{A}_1, \check{C}_3) := \int_{M_5} \check{A}_1 \cdot \check{C}_3 \in\R / \Z\,,
\end{equation}
which is a well-defined $5d$ quantity.
When all data extends to the bulk $M_6$ with $\partial M_6=M_5$, this pairing is represented by
\begin{equation}
    ( \check{A}_1, \check{C}_3) = \int_{M_6} F\wedge  X_4 \,,
\end{equation}
which agrees with the $6d$ bulk coupling~\eqref{eq:6dWZ} up to a sign. We note that the differential cohomology pairing in~\eqref{eq:diffcohopairing} can be defined directly on a manifold $M$ of general dimensions without introducing the bounding manifold $N$ with $\partial N= M$; see Section~2.3 of~\cite{Okada:2025kie} for the explicit example $M= S^1$.

Under the presence of the magnetic string, the class $\check{A}_1 \cdot \check{C}_3$ now belongs to $\check{H}^6(M_5')$. In order to evaluate this class on the relative fundamental class $[M_5', \partial M_5']$ to obtain a well-defined coupling,  the product $\check{A}_1 \cdot \check{C}_3$ needs to be trivialized on the boundary $\partial M_5'= \Theta$. A natural way to obtain such a trivialization is to trivialize one of the two factors. Recall that the topological classes underlying $\check{A}_1$ and $\check{C}_3$ are
\begin{equation}
    c_1 \qquad \text{and }\qquad X_4\,,
\end{equation}
where $c_1 = \left[\frac{F_2}{2\pi}\right]$ is the first Chern class of the gauge bundle on $M_5'$.
The magnetic charge quantization condition~\eqref{eq:magcharge} forces $c_1$ to be nontrivial on $\Theta$; hence, we need a trivialization of $X_4$ on $\Theta$, which can be represented locally by
\begin{equation}
    \td  \psi_3 = X_4 \,,
\end{equation}
for some three form $\psi_3$.
In the present situation this existence condition is automatic.
Indeed, $\Theta$ is the $S^2$-bundle over the magnetic string worldsheet $\Sigma$, and for any such 4-manifold one has
\begin{equation}
    p_1(T\Theta)=0 \,.
\end{equation}
Hence, $X_4|_\Theta$ vanishes in cohomology both in the oriented case, where $X_4=p_1$, and in the spin case, where $X_4=\frac12 p_1$. Therefore, a trivialization of $X_4$ on $\Theta$ always exists.

However, the choices of $\psi_3$ are additional data. More precisely, the set of trivializations is not canonical; rather, it forms a \emph{torsor}\footnote{It means different choices differ by a class of $H^3(\Theta,\Z)$.} for $H^3(\Theta,\Z)$. By Gysin sequence, one has
\begin{equation}
    H^3(\Theta,\Z)\cong H^1(\Sigma,\Z)\,.
\end{equation}

\begin{itemize}
    \item In the oriented case, $X_4=p_1$. Thus, after choosing one reference trivialization $\psi^0_3$, any other one is obtained by shifting by a class in $H^1(\Sigma,\Z)$.

    \item  In the spin case, $X_4= \frac{1}{2} p_1$. A trivialization of $\frac{1}{2}p_1$ is called a \emph{string structure}~\cite{Tachikawa:2021mvw}; in our case, the set of string structures forms a torsor for $H^1(\Sigma,\Z)$.

    \item Finally, the spin trivialization is a refinement of the oriented one: forgetting the spin refinement sends a trivialization of $\frac12 p_1$ to a trivialization of $p_1$, and under the $H^3(\Theta,\Z)$-action this map is given by multiplication by $2$. Equivalently, once a reference spin trivialization is chosen, the corresponding oriented trivializations differ only by \emph{even} classes in $H^1(\Sigma,\Z)$.
\end{itemize}

We thus conclude that, in the $H_3 \wedge X_4$ phase of $U(1)$ 1-form symmetry, the magnetic strings are not specified solely by their magnetic charge and embedding.
They also carry an additional topological sector, corresponding to a choice of trivialization of the boundary class on the sphere bundle $\Theta$.
These sectors are classified by $H^3(\Theta,\Z)\cong H^1(\Sigma,\Z)$.
Equivalently, $H^1(\Sigma,\Z)$ labels the distinct topological sectors of the magnetic string worldsheet.
In the spin case, the corresponding sectors refine the oriented ones, and map to them by multiplication by $2$.

\subsubsection{$(I_{\Z} \tilde \Omega^{\rm SO/Spin})^{7}(K(\Z,3) \times K(\Z,4))$ and Phases of $5d$ Maxwell Theory}
We briefly reviewed the phases of $U(1) \times U(1)$ 1-form symmetries for $4d$ Maxwell theory earlier in this section. With $(I_{\Z} \Omega^{\rm SO})^{7}(K(\Z,3)) \cong \Z \cong (I_{\Z} \Omega^{\rm Spin})^{7}(K(\Z,3))$ at hand, we now explore phases of $5d$ Maxwell theory. For this reason, we should likewise include the magnetic dual symmetry background, a $U(1)$ 2-form symmetry whose 3-form background gauge field is classified topologically by $K(\Z,4)$. In other words, we need to look at $(I_{\Z} \Omega^{\rm SO/Spin})^{7}(Q)$ for $Q:= K(\Z,3) \times K(\Z,4)$. Let $\gamma \in H^4(K(\Z,4),\Z)$ be the generator, and let $\bar \gamma$ denote its mod-2 reduction. In Appendix~\ref{sec:OmegaKZ3KZ4}, we computed $\tilde \Omega^{\rm SO/Spin}_{6,7}(Q)$, and the result and corresponding invertible phases read
\begin{equation}
    \begin{tabular}{c|cc}
  $n$ & $6$ & $7$  \\ \midrule
$\tilde \Omega^{\rm SO}_n(Q)$ & $\Z_2$ & $\Z^2$ \\  \midrule
$\tilde \Omega^{\rm Spin}_n(Q)$ & $0$ & $\Z^2$
\end{tabular}
\qquad \qquad
    \begin{aligned}
(I_{\Z} \tilde \Omega^{\rm SO})^{7}(Q) &\cong \Z \,\iota\, \gamma \oplus \Z \,\iota \, p_1  \oplus \Z_2 \, w_2 \,\bar \gamma\,,   \\ 
(I_{\Z} \tilde \Omega^{\rm Spin})^{7}(Q) &\cong \Z \,\iota\, \gamma \oplus \Z \,\frac{1}{4}\iota \, p_1   \,.
\end{aligned}
\end{equation}
$\iota \smile p_1$ is just the integral lift of our anomaly polynomial $H_3 \wedge p_1$. We also see the expected mixed 't Hooft anomaly between the electric $U(1)$ 1-form and the magnetic $U(1)$ 2-form symmetries:
\begin{equation}
    \iota \smile \gamma\,,
\end{equation}
or in terms of differential forms:
\begin{equation}
    H_3 \wedge H_4\,,
\end{equation}
where $H_4$ is the 4-form field strength of the magnetic 3-form gauge field whose integral flux is $\gamma$.

\paragraph{The $\Z_2$ valued 6-cocycle $w_2 \smile \bar \gamma$.} An observation is that, by adding the magnetic dual, we have the $\Z_2$-anomaly on oriented manifolds:
\begin{equation}
    w_2 \smile \bar \gamma\,.
\end{equation}
This is the analog of~\eqref{eq:maganomaly4d}. From anomaly inflow perspective it cancels the anomaly from the topological coupling
\begin{equation} \label{eq:anomaly5ddd}
     \pi \I\int_{M_5} w_2 \smile \bar{d}^{\rm dual}_3 \,.
\end{equation}
Here, $\bar{d}^{\rm dual}_3$ is the mod-2 reduction of the characteristic class of the magnetic dual 3-form gauge field. The presence of~\eqref{eq:anomaly5ddd} will force all electrically charged particles to be fermions, exactly as in the $4d$ case. On spin manifolds it trivializes, as $w_2=0$ is forced by spin structure.

\subsection{$7d$ Theories and Discrete Anomalies}

For $d=7$, we can read off the invertible phases from \eqref{eq:invphase-KZ3-physics} at $n=d+2 = 9$ and
\begin{equation}
    \begin{aligned}
(I_{\Z}  \Omega^{\rm SO})^{9}(K(\Z,3))& \cong  \Z_2 \, u \,\mathrm{Sq}^2 u   \oplus \Z_2 \, u \,w_2 w_3  \,,\\  (I_{\Z}  \Omega^{\rm Spin})^{9}(K(\Z,3)) &\cong \Z_2 \, u \,\mathrm{Sq}^2 u  \,.
\end{aligned}
\end{equation}
The contribution comes purely from the torsion part of $\Omega^{\rm SO/Spin}_8(K(\Z,3))$ and this leads to a $\Z_2$-valued discrete anomaly inflow from the $8d$ bulk. If we look at a $7d$ theory defined on an oriented but non-spin manifold $M_7$, then there are two possible $\Z_2$-valued anomalies:
\begin{equation} \label{eq:7dorientedphases}
 u \smile \mathrm{Sq}^2 u \qquad\text{and} \qquad u \smile w_2 \smile w_3\,.
\end{equation}
We assume here that we have a $U(1)$ (or a $U(1)$ factor) gauge field $A_1$ with first Chern class $c_1$, whose Wilson lines are charged under the 1-form symmetry.

\paragraph{The pure $U(1)$ 1-form anomaly $u\smile \mathrm{Sq}^2 u$.} For the anomaly $u\smile \mathrm{Sq}^2 u$, there is a direct analogy with the 4-dimensional axionic Green--Schwarz coupling $\phi\,F\wedge F$, which descends from the six-form anomaly polynomial $F^3$. Namely, once one chooses a boundary St\"uckelberg type relation 
\begin{equation}
    u=\delta \bar c_1\,,
\end{equation}
where $\bar c_1$ is understood as the mod-$2$ first Chern class in the lifted boundary description, the 8-dimensional higher-form anomaly $u\smile \mathrm{Sq}^2 u$ admits the $7d$ descendant
\begin{equation}
    \bar c_1\smile \mathrm{Sq}^2 u\,.
\end{equation}
This, however, requires introducing the gerbe field $B_2$ into the $7d$ description and effectively corresponds to a Higgsed realization of its 1-form gauge symmetry on the boundary. Now this topological term $u\smile \mathrm{Sq}^2 u$ appears as a $U(1)$ 1-form gauge anomaly of a $7d$ theory containing the dynamical pair $(A_1, B_2)$. We interpret the presence or absence of such anomaly of the $U(1)$ 1-form symmetry, as a choice of new discrete $\theta$-angle $(\theta=0,\pi)$, in the definition of the $7d$ generalized Maxwell theory.

\paragraph{Anomaly interplay.} Alternatively, since the 8-cocycle $u\,\mathrm{Sq}^2 u$ is written entirely in terms of $\Z_2$-valued classes, one might wonder whether it should instead be interpreted as an anomaly of the discrete $\Z_2\subset U(1)$ 1-form subgroup, so that the corresponding boundary theory could be described purely in terms of $\Z_2$-cocycles on $K(\Z_2,2)$, in the sense of~\cite{Kapustin:2014zva}.

More generally, the embedding of anomalies of a subgroup into those of an ambient symmetry group was studied long ago in~\cite{Elitzur:1984kr}, and is often referred to nowadays as \emph{anomaly interplay}. For example, the $4d$ $SU(2)$ Witten anomaly can be related to the perturbative anomaly of $U(2)$~\cite{Davighi:2020bvi}. An example involving anomaly interplay for 2-groups may be found in~\cite{Davighi:2023luh}. Our discussion here proceeds in the reverse direction: we ask how a $\Z_2$-valued anomaly of the $U(1)$ one-form symmetry behaves when restricted to its $\Z_2$ subgroup.

The inclusion $i:\Z_2\hookrightarrow U(1)$ induces a map
\begin{equation}
    B^2i:K(\Z_2,2)\longrightarrow K(\Z,3)\,,
\end{equation}
and hence a pullback isomorphism
\begin{equation}
(B^2i)^*:H^3(K(\Z,3),\Z_2)\longrightarrow H^3(K(\Z_2,2),\Z_2)\,.
\end{equation}
If $x$ denotes the generator of $H^2(K(\Z_2,2),\Z_2) \cong \Z_2$, then
$(B^2i)^*(u)=\mathrm{Sq}^1 x.$
It follows that
\begin{equation}
\label{pullbackanomaly}
    (B^2i)^*(u\smile \mathrm{Sq}^2 u)
    = \mathrm{Sq}^1 x \smile \mathrm{Sq}^2(\mathrm{Sq}^1 x)
\,,
\end{equation}
which is an \emph{admissible} polynomial generator in the mod-2 cohomology ring of $K(\Z_2,2)$. Thus, $u\smile \mathrm{Sq}^2 u$ pulls back nontrivially to $K(\Z_2,2)$, and therefore we should look at the result of $\tilde \Omega^{\rm SO/Spin}_8(K(\Z_2,2))$ to determine whether this anomaly trivializes after restriction to the subgroup $\Z_2\subset U(1)$. For the spin case, the reduced bordism group is $\tilde \Omega^{\rm Spin}_8(K(\Z_2,2))\cong \Z_8$.\footnote{$\tilde \Omega^{\rm Spin}_n(K(\Z_2,2))$ can be directly obtained from $ko_n(K(\Z_2,2))$ computed using the Adams spectral sequence in \cite{davis2025connective}.} The mod-$2$ reduction of a generator of $E^\infty_{8,0}\cong\Z_8$ is characterized by annihilating both the image of $\mathrm{Sq}^1:H^7(K(\Z_2,2),\Z_2)\to H^8(K(\Z_2,2),\Z_2)$ and the image of $\mathrm{Sq}^2:H^6(K(\Z_2,2),\Z_2)\to H^8(K(\Z_2,2),\Z_2)$. The two relevant images are generated by $\mathrm{Sq}^1x\,\mathrm{Sq}^2\mathrm{Sq}^1x+x(\mathrm{Sq}^1x)^2$ and $x^4+x(\mathrm{Sq}^1x)^2$. Since \eqref{pullbackanomaly} is not in their span, it pairs nontrivially with the $E^\infty_{8,0}$ class. Therefore, the anomaly of $U(1)$ 1-form symmetry does not trivialize after restriction to $\Z_2\subset U(1)$. Instead, it becomes an order-two element inside the intrinsic $\Z_8$ anomaly group of the $\Z_2$ 1-form symmetry.

The latter anomaly of~\eqref{eq:7dorientedphases} corresponds to the $7d$ topological coupling
\begin{equation} \label{eq:7duw2w3}
    \pi \I  \int_{M_7} \bar {c}_1 \smile w_2 \smile w_3\,.
\end{equation}
Assuming the presence of~\eqref{eq:7duw2w3}, one can ask again what is the physics consequence here. The interpretation is straightforward; we proceed in a similar manner to the previous section.

\paragraph{Magnetic branes on oriented manifolds.}
In $7d$, the magnetic dual of a $U(1)$ 1-form gauge field $A_1$ is a $U(1)$ 4-form gauge field. Accordingly, the objects magnetically charged under $A_1$ are 3-dimensional extended objects, which we will refer to as magnetic branes. Let $W$ denote the 4-dimensional world volume of such a magnetic brane, and suppose that
\begin{equation}
    \td F_2 = 2\pi \,\delta_W \,,
\end{equation}
so that the magnetic brane carries unit magnetic charge.

In the presence of this magnetic brane background, we would like the coupling~\eqref{eq:7duw2w3} to be well defined on
\begin{equation}
   M_7' := M_7 \setminus N(W)\,,
\end{equation}
where $N(W)$ is a tubular neighborhood of $W$. For the coupling~\eqref{eq:7duw2w3} to be well defined on $M_7'$, one must specify a trivialization of the class $\bar c_1 \smile w_2 \smile w_3$ along the boundary $\partial M_7'$.

The boundary $\partial M_7'$ is the $S^2$-bundle over $W$ associated to the normal bundle of the embedding $W\subset M_7$. Since the magnetic charge is nontrivial, $\bar c_1$ restricts nontrivially to the $S^2$ fiber of $\partial M_7'$. Therefore, one cannot trivialize $\bar c_1$ on the boundary. In order to define the coupling, one must instead trivialize either $w_2$ or $w_3$ along $\partial M_7'$.
\begin{itemize}
    \item A trivialization of $w_2$ is precisely a spin structure on $\partial M_7'$.
    \item On the other hand, a trivialization of $w_3$ can also make the coupling well-defined. We emphasize that here we mean a trivialization of the mod-2 class $w_3$, not of its integral lift $W_3$ (which is relevant for anomaly cancellation in string theory with $D$-branes~\cite{Freed:1999vc}); thus, the relevant boundary datum is not a $\mathrm{Spin}^c$ structure, but rather a choice of null-homotopy for $w_3$.

\end{itemize}

Thus, in the magnetic brane background, the coupling~\eqref{eq:7duw2w3} can be defined only after making one of these additional choices on the boundary.  Unlike in the $4d$ and $5d$ cases, the relevant structure is not naturally an intrinsic property of the brane world volume $W$ alone. Rather, the world volume should be viewed as carrying a fermionic structure twisted by the transverse bundle. In other words, the brane supports fermionic, or more generally sign-sensitive, data that is defined only relative to the way $W$ is embedded in the ambient spacetime $M_7$, and not solely in terms of the intrinsic geometry of $W$.

If our $7d$ theory is defined on a spin manifold, then $w_2 w_3$ vanishes, and hence we are left only with the $\Z_2$-invertible phase $ u\,\mathrm{Sq}^2 u$.

\paragraph{$7d$ Maxwell and its phases.}
For a toy model, we can again take $7d$ free Maxwell theory. This amounts to include the background 5-form gauge field for the magnetic $U(1)$ 4-form symmetry, which is classified by $K(\Z,6)$. Let $T= K(\Z,3) \times K(\Z,6)$, its associated invertible phases are computed in Appendix~\ref{sec:OmegaKZ3KZ6}. We denote by $H_6$ the field strength of the magnetic 5-form background gauge field. Its integral flux is the generator $\zeta \in H^6(K(\Z,6),\Z) \cong \Z$, with mod-2 reduction $\bar \zeta$. The results are
\begin{equation}
    (I_{\Z} \tilde \Omega^{\rm SO})^{9}(K(\Z,3) \times K(\Z,6)) \cong \Z \,\iota \zeta \oplus \Z_2 \, u\,\mathrm{Sq}^2 u \oplus \Z_2 \,u \ w_2 \ w_3 \oplus \Z_2 \,\bar \zeta w_2  \,,
\end{equation}
and
\begin{equation}
   (I_{\Z} \tilde \Omega^{\rm Spin})^{9}(K(\Z,3) \times K(\Z,6)) \cong \Z \,\iota \zeta \oplus \Z_2 \, u\,\mathrm{Sq}^2 u   \,.
\end{equation}
In both cases, we have the mixed 't Hooft anomaly $\iota \smile \zeta$ or just $H_3 \wedge H_6$. A new $\Z_2$ anomaly appears in the oriented setup
\begin{equation}
     w_2 \smile\bar \zeta \,,
\end{equation}
which is again the mixed 't Hooft anomaly between the magnetic $U(1)$ 4-form symmetry and diffeomorphism symmetry. In such a phase, the electrically charged particles are all fermions.

\section{Top-Down Constructions of Anomalies for $U(1)$ 1-Form Symmetries}
\label{sec:topdown}

In this section, we would like to realize the new anomalies for $U(1)$ 1-form symmetry, with top-down string/M-theory constructions, following the lines of \cite{Apruzzi:2021nmk,vanBeest:2022fss}.

\subsection{$H_3\wedge p_1$ Anomaly for $5d$ Theories}

We first construct a physical example realizing the 7-form anomaly polynomial $H_3\wedge p_1$ for $5d$ quantum field theories.

We consider the IIA superstring theory on $\mc{M}_{10}=\mc{M}_6\times L_4$, where $L_4$ is a compact 4-manifold. In IIA superstring theory, there is a topological action
\be
\ba
I_{10}&=-\frac{1}{2}B_2\wedge G_4\wedge G_4-B_2\wedge X_8\,,\cr
X_8&=\frac{1}{192}(p_1(\mc{M}_{10})^2-4p_2(\mc{M}_{10}))\,,
\ea
\ee
with the gauge invariant 11-form
\be
\label{5d-I11}
I_{11}=-\frac{1}{2}H_3\wedge G_4\wedge G_4-H_3\wedge X_8\,.
\ee
Here $\mathcal M_6$ is viewed as the spacetime of the anomaly theory whose boundary supports the $5d$ QFT. Dimensionally reducing $I_{11}$ over $L_4$ produces a 7-form anomaly polynomial containing $H_3\wedge p_1$. In this setup, the NS-NS 2-form gauge field $B_2$ is identified with the background gauge field of the $U(1)$ 1-form symmetry in $5d$, through the expansion over $1\in H^0(L_4)=\Z$.

For the R-R 4-form flux $G_4$, we should be careful that it is quantized as~\cite{Witten:1996md}
\be
\label{G4-quantization}
G_4=a-\frac{p_1(T\mc{M}_{10})}{4}\,,
\ee
where $a\in H^4(\mc{M}_{10},\Z)$ is an integral 4-cocycle. Consequently, the first term in \eqref{5d-I11} also contributes to the term $H_3\wedge p_1$. As we only focus on the $U(1)$ 1-form symmetry associated with $H_3$, here we do not consider other background gauge fields from the expansion of $a$.

Using the K\"unneth formula we expand
\begin{align}
p_1(T\mc{M}_{10})&=p_1(T\mc{M}_6)+p_1(TL_4)\,,\\
p_2(T\mc{M}_{10})&=p_2(T\mc{M}_6)+p_2(TL_4)+p_1(T\mc{M}_6) p_1(TL_4)\,.
\end{align}
Hence, after we reduce $I_{11}$ over $L_4$, we get the term
\be
I_7=-\frac{5}{96}\bigg(\int_{L_4}p_1(TL_4)\bigg) (H_3\wedge p_1(T\mc{M}_6))\,.
\ee
Using the Hirzebruch signature theorem $\int_{L_4}p_1(TL_4)=3\sigma(L_4)$ where $\sigma(L_4)$ is the signature of $L_4$, we obtain
\be
\label{SUGRA-I7}
I_7=-\frac{5}{32}\sigma(L_4) (H_3\wedge p_1(T\mc{M}_6))\,.
\ee

In particular, when $L_4=S^4$, the $5d$ theory is the reduction of $6d$ $\mathfrak{u}(N)$ (2,0) theory on $S^1$, and hence the $5d$ $U(N)$ maximal SYM. The $U(1)$ 1-form symmetry is then identified as the center symmetry of the $5d$ $U(N)$ maximal SYM. However, there is no anomaly for such symmetry in this case, as $\sigma(S^4)=0$.

The anomaly arises whenever $\sigma(L_4)\neq 0$ $(\mathrm{mod}\ 32)$, i.e., $L_4=\mb{CP}^2$ or a del Pezzo surface $dP_n$ $n>1$. However, in such cases $L_4$ is not a spin manifold unless $\sigma(L_4)=0$ $(\mathrm{mod}\ 16)$. When $L_4$ is not a spin manifold, the SUGRA solution does not exist. The non-supersymmetric background of IIA on $AdS_6\times L_4$ may exist at a finite string coupling, similar to the cases of non-supersymmetric orbifolds in \cite{Braeger:2024jcj,Heckman:2024zdo}.

If we want $L_4$ to be a spin manifold, then we can take $L_4=K3$ with $\sigma(K3)=-16$. However, as the $K3$ surface is Ricci-flat, this cannot define a valid $AdS_6\times L_4$ background, either.

\subsection{$u\, \mathrm{Sq}^2 u$ Anomaly for $7d$ Theories}

Now we investigate the possibility of realizing the anomaly term $u\,\mathrm{Sq}^2 u$ for a $U(1)$ 1-form symmetry in a $7d$ QFT, constructed from a top-down approach. As a hint, the $10d$ IIA superstring theory has a topological term $I_{10}$ defined mod 2~\cite{Diaconescu:2000wy}, satisfying the following identities over a 10-manifold $\mc{M}_{10}$:
\begin{align}
I_{10}&= \bar a \smile \omega_6= \bar a\smile \mathrm{Sq}^2 \bar a\,,
\end{align}
where $\bar a$ is the mod 2 restriction of the integral cohomology class (\ref{G4-quantization}), $\bar a\in H^4(\mc{M}_{10},\Z_2)$. Note that the normalization is chosen such that $\int_{M_{10}}2I_{10}\in\Z$.

Let us consider IIA superstring theory on $\mc{M}_8\times L_2$, and we reduce $\bar a$ over a $\Z_2$ torsional cohomology class $\theta\in H^1(L_2,\Z_2)$ as
\be
\bar a=\theta \smile u\,.
\ee
With the K\"unneth formula for the Steenrod square $\mathrm{Sq}^2$, we derive
\be
\ba
I_8=\int_{L_2} I_{10}=\int_{L_2}(\theta\smile u)(\mathrm{Sq}^2 u\smile\theta)=\bigg(\int_{L_2}\theta\smile \theta\bigg) u\,\mathrm{Sq}^2 u\,.
\ea
\ee
Hence, the anomaly term of the form $u\,\mathrm{Sq}^2 u$ arises if we choose a compact manifold $L_2$ with the pairing
\be
\int_{L_2}\theta\smile \theta=1
\ee
for the $\Z_2$-valued 1-cocycle $\theta$. Such example of $L_2$ can be chosen as the real projective plane $\mathbb{RP}^2$.

Nonetheless, in this case we should typically interpret $\Z_2$-valued 3-cocycle $u$ as the background gauge field for a $\Z_2$ 2-form symmetry, instead of the field strength for a $U(1)$ 1-form symmetry. Thus, despite the top-down construction of the anomaly of the expected form $u\,\mathrm{Sq}^2 u$, we do not directly obtain $u$ as the $\Z_2$ restriction of the field strength for $U(1)$ 1-form symmetry.

\section{Conclusions}
\label{sec:conclusion}
In this work, we studied anomalies of $U(1)$ 1-form symmetries from the viewpoint of invertible phases and bordism. The main technical input was the computation of $\Omega^{\rm SO/Spin}_{d\leqslant 8}(K(\Z,3))$, together with the identification of the bordism invariants and geometric generators. The free part of the Anderson dual groups reproduces perturbative anomaly polynomials, while the torsion part gives global anomalies invisible to the local descent formalism. This distinction is manifested clearly in the anomalies of $U(1)$ 1-form symmetry: in $5d$ the anomaly is detected by the local class $H_3\wedge p_1$, whereas in $7d$ it is detected by the genuinely $\Z_2$-valued bordism invariant $u\,\mathrm{Sq}^2 u$.

The anomaly polynomial $H_3\wedge p_1$ gives a mixed anomaly between the $U(1)$ 1-form symmetry and spacetime diffeomorphisms. Its physical effect can be seen in the magnetic string sector: in the presence of the anomaly, a magnetic string is specified not only by its charge and embedding, but also by a choice of trivialization of the relevant characteristic class on the boundary of a tubular neighborhood. This is the direct higher-form analogue of the way fermionic monopoles appear in $4d$ Maxwell theory. In the spin case, the characteristic class is refined from $p_1$ to $\frac12 p_1$; forgetting this refinement maps the corresponding trivializations to the oriented ones by multiplication by 2.

The torsion class $u\,\mathrm{Sq}^2 u$ is more subtle. We interpreted it as a discrete $\theta$-angle for a generalized Maxwell-type theory, after imposing a boundary St\"uckelberg-type relation between the dynamical field $A_1$ and the 2-form field $B_2$. This provides a useful boundary realization of the anomaly, but not yet an intrinsic description in which $B_2$ remains a purely background field for the $U(1)$ 1-form symmetry. Finding such an intrinsic boundary construction would be an important next step. We also found that the class does not become trivial after restriction to the subgroup $\Z_2\subset U(1)$; rather, it maps to an order-two element in the $\Z_8$ anomaly group of a $\Z_2$ 1-form symmetry. This gives a concrete example of anomaly interplay for higher-form symmetries in the direction from a continuous symmetry to a finite subgroup.

Several extensions are natural. First, one can replace the pure $K(\Z,3)$ background by a higher-group background in which the $B$-field is only one component of a larger symmetry structure. In that case the relevant target space is no longer simply a product of Eilenberg--Mac Lane spaces, and the Postnikov data should affect both the AHSS differentials and the possible anomaly terms. It would also be interesting to understand whether analogous structures exist for non-Abelian gerbes or other models of non-Abelian higher gauge fields, although such backgrounds are not classified by $K(\Z,3)$ and require a different input.

Second, one can consider higher-form analogues. For a continuous $U(1)$ $p$-form symmetry, the topological background is classified by $K(\Z,p+2)$. The same bordism strategy should apply, but the relevant Steenrod operations and extension problems will change with $p$. Even for $K(\Z,3)$ itself, it would be useful to continue the computation beyond degree 8 in order to see how the higher torsion classes enter the anomaly classification in dimensions greater than seven.

Finally, throughout this work we considered oriented (bosonic) and spin (fermionic) theories without time-reversal symmetry. Including time reversal would require more general tangential structures, such as $\mathrm O$, $\mathrm{Pin}^{\pm}$, etc., and may lead to additional bordism invariants and new mixed anomalies between $U(1)$ $1$-form symmetries and spacetime symmetries. These directions should help clarify how continuous higher-form symmetries fit into the general classification of invertible phases and anomaly inflow.

\section*{Acknowledgements}
We would like to thank Hank Chen, Jingyuan Chen, Zheng-Cheng Gu, Marc Klinger, Tian Lan, Shota Saito, Yuji Tachikawa, Qing-Rui Wang, Weicheng Ye, Piljin Yi, and Hao Y.~Zhang for discussions. WJ is supported by the Research Grants Council of Hong Kong Special Administrative Region of China (Project No.~14302725) under the scheme of General Research Fund. YNW is supported by National Natural Science Foundation of China under Grant No.~12422503. YZ is supported by WPI Initiative, MEXT, Japan at Kavli IPMU, the University of Tokyo.

\appendix

\section{Computation of $ \tilde \Omega^{\rm SO/Spin}_{6,7}(K(\Z,3) \times K(\Z,4))$}
\label{sec:OmegaKZ3KZ4}
The reduced (co)homology data (denoted by $\tilde H$) of $Q=K(\Z,3) \times K(\Z,4)$ can be computed from those of $K(\Z,3)$ and $K(\Z,4)$ by the K\"unneth formula. The results are given in Table \ref{tab:Qdata}.
\begin{table}[H]
\centering
    \begin{tabular}{c|cccccc}
  $p$ &  $3$ & $4$ & $5$ & $6$ & $7$ & $8$  \\ \midrule
$\tilde H^p(Q,\Z)$  & $\Z \iota$ & $ \Z \gamma $ & $0$ & $\Z_2 \iota^2$ & $\Z \iota \gamma \oplus \Z_2 \xi$ & $ \Z \gamma^2 \oplus \Z_3 \kappa $ \\ \midrule
$\tilde H_p(Q,\Z)$   & $\Z$ & $ \Z $ & $\Z_2$ & $\Z_2$ & $\Z \oplus \Z_3$ & $ \Z \oplus \Z_2 \oplus \Z_3 $ \\ \midrule
$\tilde H^p(Q,\Z_2)$ & $\Z_2 u $ & $ \Z_2 \bar \gamma $ & $\Z_2 \, \mathrm{Sq}^2 u$ & $\Z_2 u^2 \oplus \Z_2 \,\mathrm{Sq}^2 \bar \gamma$ & $\Z_2 u \bar \gamma \oplus \Z_2 \,\mathrm{Sq}^3 \bar \gamma$ & $\Z_2 \bar \gamma^2\oplus \Z_2 u\,\mathrm{Sq}^2 u$ \\ \midrule
$\tilde H_p(Q,\Z_2)$  & $\Z_2$ & $ \Z_2 $ & $\Z_2$ & $\Z_2^2$ & $\Z_2^2$ & $\Z_2^2$
\end{tabular}
\caption{(Co)homology data of $Q=K(\Z,3) \times K(\Z,4)$.}
\label{tab:Qdata}
\end{table}
We follow the conventions of Section~\ref{sec:KZ3}. Here
$\gamma$ and $\xi$ denote the generators of
$H^4(K(\Z,4),\Z)\cong \Z$ and
$H^7(K(\Z,4),\Z)\cong \Z_2$, respectively. The nontrivial groups start from $p=3$ and barred letters represent mod-2 reduction of the corresponding integral classes. There is an extra relation $\bar \xi = \mathrm{Sq}^3 \bar \gamma$. We will use $(I_{\Z}\tilde \Omega^{\mathcal S})^n(X)$ to denote the reduced contribution to $(I_{\Z}\Omega^{\mathcal S})^n(X)$ computed from $\tilde \Omega^{\mathcal S}_\bullet(X)$.

\paragraph{Oriented case.}
Since we are only targeting $\tilde \Omega^{\rm SO/Spin}_{6,7}(Q)$, we only show the relevant nonvanishing entries in Table~\ref{tab:E2orientedQ}. Here $C=\tilde H_7(Q,\Z) \cong \Z \oplus \Z_3$ and $D=\tilde H_8(Q,\Z) \cong \Z \oplus \Z_2 \oplus \Z_3$. An immediate result is that
\begin{equation}
    \boxed{\tilde \Omega^{\rm SO}_6(Q) \cong  \tilde H_6(Q,\Z)\cong \Z_2}\,.
\end{equation}
\begin{sseqdata}[ name = E2orientedQ, classes = { draw = none }, axes type =  frame, scale = 0.8 ]
\class(0,0)
\class(0,7)
\class["\Z_2"](6,0)
\class["C"](7,0)
\class["D"](8,0)
\class["\Z"](3,4)
\end{sseqdata}
\begin{table}[htbp]
  \centering
  \printpage[
    name = E2orientedQ,
    grid = chess,
  ]
  \caption{The relevant $E^2$-page entries for $E^2_{p,q} = H_p(Q,\Omega^{\rm SO}_q({\rm pt}))$.}
  \label{tab:E2orientedQ}
\end{table}
Moreover, by the K\"unneth formula, the only degree-6 contribution comes from the $K(\Z,4)$ factor of $Q$, and hence
$H_6(Q,\Z)\cong H_6(K(\Z,4),\Z)\cong \Z_2$. We see that this 2-torsion group is detected by the generator $\mathrm{Sq}^2 \bar \gamma \in H^6(K(\Z,4),\Z_2) \cong \tilde H^6(K(\Z,4),\Z_2)$, and for any class $[M_6,f] \in \tilde \Omega^{\rm SO}_6(Q)$ we have the Wu formula
\begin{equation}
    \int_{M_6} \mathrm{Sq}^2 \bar \gamma = \int_{M_6} w_2 \smile \bar \gamma \in \Z_2 \,.
\end{equation}

Let $p: Q \rightarrow K(\Z,3)$ be the projection map, then, it induces a natural map between the AHSSs of $Q$ and of $K(\Z,3)$ such that the following particular diagram is commutative
\begin{equation}
\begin{tikzcd} \label{eq:commdiaQ}
 E^5_{8,0}(Q) \cong  \tilde H_8(Q,\Z)  \arrow[r, "p_*"] \arrow[d," \td_Q^5"'] & E^5_{8,0}(K(\Z,3)) \cong \tilde H_8(K(\Z,3),\Z) \cong \Z_2 \arrow[d,"\td^5"] \\
 E^5_{3,4}(Q) \cong  \tilde H_3(Q, \Omega^{\rm SO}_4(\rm pt)) \arrow[r,"p_*"'] & E^5_{3,5}(K(\Z,3)) \cong \tilde H_3(K(\Z,3),\Omega^{\rm SO}_4(\rm pt)) \cong \Z \,.
\end{tikzcd}
\end{equation}

We already observed in~\eqref{eq:d5KZ3} that the left-hand differential $\td^5$ is trivial for group value reasons, and that the lower map $p_*$ is an isomorphism by construction. By commutativity of the diagram, it follows that $\td_Q^5=0$. Thus, the only nontrivial $E^\infty$-terms in total degree $7$ are $E^\infty_{7,0}(Q) = \tilde H_7(Q,\Z) \cong \Z \oplus \Z_3$ and $E^\infty_{3,4}(Q) = \tilde H_3(Q,\Z) \cong \Z $.
The extension problem involving the $\Z_3$-summand of $E^\infty_{7,0}(Q)$ and the $\Z$-summand of $E^\infty_{3,4}(Q)$ was resolved in Theorem~\ref{theoremOmega7}. Therefore, we obtain
\begin{equation}
    \boxed{ \tilde \Omega^{\rm SO}_{7}(Q)\cong \Z^2}\,.
\end{equation}
These two free summands are detected by the bordism invariants $\iota\smile \gamma$ and $\iota\smile p_1$, respectively.

\paragraph{Spin case.} For spin bordism computation, we have Table~\ref{tab:E2spinQ} for relevant nonvanishing entries.
\begin{sseqdata}[ name = E2spinQ, classes = { draw = none }, axes type =  frame, scale = 0.8 ]

\class(0,0)
\class(0,7)

\class["\Z"](3,4)
\class["\Z_2"](3,2)

\class["\Z_2"](4,1)
\class["\Z_2"](4,2)

\class["\Z_2"](5,1)
\class["\Z_2"](5,2)

\class["\Z_2"](6,0)
\class["\Z^2_2"](6,1)
\class["\Z^2_2"](6,2)

\class["\Z^2_2"](7,1)

\class["C"](7,0)
\class["D"](8,0)

\end{sseqdata}
\begin{table}[htbp]
  \centering
  \printpage[
    name = E2spinQ,
    grid = chess,
  ]
  \caption{The relevant $E^2$-page entries for $E^2_{p,q} = H_p(Q,\Omega^{\rm Spin}_q({\rm pt}))$.}
  \label{tab:E2spinQ}
\end{table}
The differential
\begin{equation}
   \left(E^2_{6,0}(Q)\right)_{=\tilde H_6(Q,\Z) \cong \Z_2}  \longrightarrow   \left(E^2_{4,1}(Q) \right)_{= \tilde H_4(Q,\Z_2) \cong \Z_2}
\end{equation}
is the dual of mod-2 reduction composed with $\mathrm{Sq}^2$ and it is an isomorphism. Likewise,
\begin{equation}
\left( E^2_{5,1}(Q) \right)_{= \tilde H_5(Q,\Z_2) \cong \Z_2}  \longrightarrow   \left( E^2_{3,2}(Q)\right)_{= \tilde H_3(Q,\Z_2) \cong \Z_2}
\end{equation}
is the dual of $\mathrm{Sq}^2$ which is also an isomorphism.
The differential
\begin{equation}
  \left( E^2_{6,1}(Q) \right)_{ = \tilde H_6(Q,\Z_2) \cong \Z^2_2}
  \longrightarrow  \left(E^2_{4,2}(Q)\right)_{ = \tilde H_4(Q,\Z_2) \cong \Z_2 }
\end{equation}
is surjective by considering $\mathrm{Sq}^2$ acting on $\bar \gamma \in H^4(K(\Z,4),\Z_2)$.  The entire total-degree-6 diagonal is eliminated, and we can claim
\begin{equation}
   \boxed{ \tilde \Omega^{\rm Spin}_6(Q) = 0}\,.
\end{equation}

Now for the $p+q=7$ diagonal, the differential
\begin{equation}
  \left(  E^2_{7,0}(Q) \right)_{= \tilde H_7(Q,\Z) \cong \Z \oplus \Z_3} \longrightarrow \left( E^2_{5,1} (Q)\right)_{=  \tilde H_5(Q,\Z_2) \cong \Z_2 }
\end{equation} vanishes, since $\mathrm{Sq}^2$ vanishes on the dual cohomology groups. We also recall the computation~\eqref{eq:triviald280} and it follows that
$E^2_{8,0}(Q)  \longrightarrow  E^2_{6,1}(Q)$ is a trivial differential for the same reason. These lead to $E^\infty_{7,0}(Q) \cong \Z \oplus \Z_3$ and $E^\infty_{6,1}(Q) \cong \Z_2$.

New ingredients, beyond the single-$K(\Z,3)$ AHSS discussion, appear for the differential
\begin{equation}
    \td^2 : \left( E^2_{7,1}(Q) \right)_{= \tilde H_7(Q,\Z_2)\cong \Z_2^2 } \longrightarrow \left( E^2_{5,2}(Q)\right)_{= \tilde H_5(Q,\Z_2)\cong \Z_2} \,.
\end{equation}
Dually, we have $\tilde H^5(Q,\Z_2) \cong \Z_2 \,\mathrm{Sq}^2 u $ and  $\tilde H^7(Q,\Z_2) \cong \Z_2 u \bar \gamma \oplus \Z_2 \,\mathrm{Sq}^3 \bar \gamma$. Using the Adem relation and degree reasons
\begin{equation}
    \mathrm{Sq}^2(\mathrm{Sq}^2 u)=\mathrm{Sq}^3\mathrm{Sq}^1 u=0 \,,
\end{equation}
and the above $\td^2$ vanishes. Furthermore, the possible differential
\begin{equation}
    \td^3 : \left(E^3_{8,0}(Q) \right)_{= \tilde H_8(Q,\Z) \cong \Z \oplus \Z_3 \oplus \Z_2 } \longrightarrow  \left(E^3_{5,2}(Q)\right)_{\cong \Z_2} \,,
\end{equation}
is nothing but the same $K(\Z,3)$ $\td^3$-differential in \eqref{eq:triviald380}, which turns out to be trivial. Finally, it remains to consider the differential
\begin{equation}
    \td^5 :\left( E^5_{8,0}(Q) \right)_{ = \tilde H_8(Q,\Z) \cong \Z \oplus \Z_3 \oplus \Z_2}\longrightarrow \left( E^5_{3,4}(Q) \right)_{= \tilde H_3(Q,\Z)\cong \Z }\,.
\end{equation}
From the group structure alone, this differential could a priori be nontrivial. However, using the same projection $p$ as in the commutative diagram~\eqref{eq:commdiaQ}, we compare it with the corresponding differential for $K(\Z,3)$, namely~\eqref{eq:d5KZ3}. By the same naturality argument as before, it follows that this $\td^5$ also vanishes. The other final page entries are $E^\infty_{5,2}(Q) \cong \Z_2$ and $E^\infty_{3,4}(Q) \cong \Z$.

Since the $\Z$-summand in $E^\infty_{7,0}(Q) \cong \Z \oplus \Z_3$ is generated by $\iota \smile \gamma$, the filtration of the pure $K(\Z,3)$-part of $Q$ remains the same as in \eqref{eq:filtrationSpinKZ}, we get the result
\begin{equation}
     \boxed{ \tilde \Omega^{\rm Spin}_{7}(Q)\cong \Z^2}\,,
\end{equation}
with the dual basis $\iota\smile \gamma$ and $\frac{1}{4} \iota\smile p_1$.

\section{Computation of $(I_{\Z} \tilde \Omega^{\rm SO/Spin})^{9}(K(\Z,3) \times K(\Z,6))$}
\label{sec:OmegaKZ3KZ6}

Let $T = K(\Z,3) \times K(\Z,6)$ denote the classifying space under consideration. From the general discussion of invertible phases in dimension $d+2=7+2=9$, our goal is to determine the torsion subgroup of $\tilde \Omega^{\rm SO/Spin}_8(T)$ and the free part of $\tilde \Omega^{\rm SO/Spin}_9(T)$.

To this end, it is economical to compute the full group $\tilde \Omega^{\rm SO/Spin}_8(T)$, while extracting only the free part of $\tilde \Omega^{\rm SO/Spin}_9(T)$.
For the reduced (co)homology of $K(\Z,6)$, we have the classical results~\cite{Cartan1954EM}:
\begin{equation}
\tilde H^i(K(\Z,6),\Z)=
\begin{cases}
\Z\, \zeta & i=6\\
\Z_2\, \beta \mathrm{Sq}^2 \bar \zeta & i=9\\
0 & i=1,2,3,4,5,7,8
\end{cases}\,,
\end{equation}
and
\begin{equation}
\tilde H_i(K(\Z,6),\Z)=
\begin{cases}
\Z & i=6\\
\Z_2 & i=8\\
0& i=1,2,3,4,5,7,9
\end{cases}\,,
\end{equation}
where we use $\zeta$ to denote the generator of $\tilde H^6 (K(\Z,6),\Z)$ and
$\beta: H^8(K(\Z,6),\Z_2) \rightarrow  H^{9}(K(\Z,6),\Z)$ is the Bockstein homomorphism for $0 \rightarrow \Z \xrightarrow{\times \ 2} \Z \rightarrow \Z_2\rightarrow 0$.

The mod-2 results are
\begin{equation}
\tilde H^i\!\left(K(\Z,6),\Z_2\right)=
\begin{cases}
\Z_2 \,\bar \zeta & i=6\\
\Z_2 \, \mathrm{Sq}^2 \bar \zeta &i=8\\
\Z_2 \, \mathrm{Sq}^3 \bar \zeta &i=9\\
0 & i=1,2,3,4,5,7
\end{cases}\,,
\end{equation}
and
\begin{equation}
    \tilde H_i(K(\Z,6),\Z_2)=
\begin{cases}
\Z_2 & i=6,8,9\\
0 & i=1,2,3,4,5,7
\end{cases}\,,
\end{equation}
with generators $\bar \zeta\in H^6(K(\Z,6),\Z_2)$ and an identity that the mod-2 reduction of $\beta \mathrm{Sq}^2 \bar \zeta$ equals $\mathrm{Sq}^3 \bar \zeta$ by Adem relations.

\begin{table}[H]
\centering
    \begin{tabular}{c|ccccccc}
  $p$ &  $3$ & $4$ & $5$ & $6$ & $7$ & $8$ & $9$ \\ \midrule
$\tilde H^p(T,\Z)$  & $\Z \iota$ & $ 0 $ & $0$ & $\Z\,\zeta \oplus\Z_2 \,\iota^2$ & $0$ & $  \Z_3\, \kappa $ &$  \Z \,\iota \zeta \oplus \Z_2 \,\iota^3 \oplus \Z_2 \, \beta \mathrm{Sq}^2 \bar \zeta  $ \\ \midrule
$\tilde H_p(T,\Z)$   & $\Z$ & $ 0 $ & $\Z_2$ & $ \Z $ & $\Z_3$ & $ \Z_2 \oplus \Z_2 $ &$\Z \oplus \Z_2$ \\ \midrule
$\tilde H^p(T,\Z_2)$ & $\Z_2 \,u $ & $0 $ & $\Z_2 \, \mathrm{Sq}^2 u$ & $\Z_2 \, \bar \zeta  \oplus \Z_2\,u^2 \,  $ & $ 0$ & $\Z_2 \, u\,\mathrm{Sq}^2 u\oplus \Z_2 \,\mathrm{Sq}^2 \bar \zeta$ & $\Z_2 \, u\bar \zeta \oplus \Z_2 \, u^3 \oplus \Z_2\, \mathrm{Sq}^3 \bar \zeta$\\ \midrule
$\tilde H_p(T,\Z_2)$  & $\Z_2$ & $ 0 $ & $\Z_2$ & $\Z_2^2$ & $0$ & $\Z_2^2$ & $\Z_2^3$
\end{tabular}
\caption{(Co)homology data of $T=K(\Z,3) \times K(\Z,6)$.}
\end{table}

\paragraph{Oriented case.}
The relevant nonvanishing entries are presented in Table~\ref{tab:E2orientedT}. We do not show the $p+q=10$ diagonal, since any differential from it does not affect the free part of $\tilde \Omega^{\rm SO}_9(T)$ we are interested in, which comes only from the $q=0$ row. Here $F = \Z \oplus \Z_2$.
\begin{sseqdata}[ name = E2orientedT, classes = { draw = none }, axes type =  frame, scale = 0.8 ]
\class(0,0)
\class(0,9)
\class["\Z"](3,4)
\class["\Z_2"](3,5)
\class["\Z_2"](5,4)
\class["\Z_3"](7,0)
\class["\Z^2_2"](8,0)
\class["F"](9,0)
\end{sseqdata}
\begin{table}[htbp]
  \centering
  \printpage[
    name = E2orientedT,
    grid = chess,
  ]
  \caption{The relevant $E^2$-page entries for $E^2_{p,q} = H_p(T,\Omega^{\rm SO}_q({\rm pt}))$.}
  \label{tab:E2orientedT}
\end{table}
Due to the values of the groups involved, there are no nontrivial differentials from the diagonal $p+q=8$ to the diagonal $p+q=7$. Consider the projection $p' : T = K(\Z,3) \times K(\Z,6) \rightarrow K(\Z,3)$. Using the induced map on the AHSS, we can relate the potentially nontrivial differentials
\begin{equation}
\left( E^2_{5,4}(T) \right)_{= \tilde H_5(T,\Z) \cong \Z_2}  \longrightarrow   \left( E^2_{3,5}(T)\right)_{= \tilde H_3(T,\Z_2) \cong \Z_2}\,,
\end{equation}
and
\begin{equation}
\left( E^6_{9,0}(T) \right)_{= \tilde H_9(T,\Z) \cong \Z\oplus\Z_2}  \longrightarrow   \left( E^6_{3,5}(T)\right)_{= \tilde H_3(T,\Z_2) \cong \Z_2}\,,
\end{equation}
to the $\td^2$ and $\td^6$ in Lemma~\ref{lemma1}. By naturality, both of the above differentials vanish, exactly as in~\eqref{eq:commdiaQ}. This leads to the following result 
\begin{equation}
    \boxed{\tilde \Omega^{\rm SO}_8(T) \cong   \Z^3_2}\qquad \text{and} \qquad
    \boxed{ \text{Free} \ \tilde \Omega^{\rm SO}_9(T) \cong   \Z}\,.
\end{equation}
The only possible remaining extension would be an extension of
$E^\infty_{8,0}(T)=(\Z_2)^2$ by $E^\infty_{3,5}(T)=\Z_2$. This extension is
trivial. Indeed, the summand $E^\infty_{3,5}(T)$ comes entirely from the
$K(\Z,3)$ factor of $T=K(\Z,3)\times K(\Z,6)$, and
Theorem~\ref{theorem:3.3} shows that the corresponding $\Z_2$-summands from
$K(\Z,3)$ do not combine into a nontrivial extension. The remaining $\Z_2$
summand in $E^\infty_{8,0}(T)$ comes from the $K(\Z,6)$ factor, which is
independent in the cartesian product $T$. Hence the filtration splits as $(\Z_2)^3$.

Hence, together with our previous result for $K(\Z,3)$ we have
\begin{equation}
    \boxed{(I_{\Z} \tilde \Omega^{\rm SO})^{9}(K(\Z,3) \times K(\Z,6)) \cong \Z \,\iota \zeta \oplus \Z_2 \, u\,\mathrm{Sq}^2 u \oplus \Z_2 \,u \ w_2 \ w_3 \oplus \Z_2 \,\bar \zeta w_2 }  \,.
\end{equation}

\paragraph{Spin case.} Table~\ref{tab:E2spinT} contains relevant nonvanishing entries for the spin bordism computation.
\begin{sseqdata}[ name = E2spinT, classes = { draw = none }, axes type =  frame, scale = 0.8 ]

\class(0,0)
\class(0,7)
\class(0,9)

\class["\Z"](3,4)

\class["\Z_2"](5,2)
\class["\Z_2"](5,4)

\class["\Z^2_2"](6,1)
\class["\Z^2_2"](6,2)

\class["\Z_3"](7,0)

\class["\Z^2_2"](8,0)
\class["\Z^2_2"](8,1)
\class["F"](9,0)

\end{sseqdata}
\begin{table}[htbp]
  \centering
  \printpage[
    name = E2spinT,
    grid = chess,
  ]
  \caption{The relevant $E^2$-page entries for $E^2_{p,q} = H_p(T,\Omega^{\rm Spin}_q({\rm pt}))$.}
  \label{tab:E2spinT}
\end{table}
The differential
\begin{equation}
\left( E^2_{8,0}(T) \right)_{= \tilde H_8(T,\Z) \cong \Z^2_2}  \longrightarrow   \left( E^2_{6,1}(T)\right)_{= \tilde H_6(T,\Z_2) \cong \Z^2_2},
\end{equation}
is nontrivial. Cohomologically, it acts on the generators as
\begin{equation}
    \begin{aligned} \label{eq:80618162}
       u^2 &\longmapsto \mathrm{Sq}^2 u^2 = 0 \mod 2 \\
       \bar \zeta &\longmapsto \mathrm{Sq}^2 \bar \zeta \neq 0\,,
    \end{aligned}
\end{equation}
where $\mathrm{Sq}^2 \bar \zeta$ generates a $\Z_2$-summand of $\tilde H_6(T,\Z_2)$. It follows that $E^3_{8,0}(T) \cong E^3_{8,0}(K(\Z,3))\cong \Z_2$.
The differential on the $E^3$-page
\begin{equation}
\left( E^3_{8,0}(T) \right)_{\cong \Z_2}  \longrightarrow   \left( E^3_{5,2}(T)\right)_{= \tilde H_5(T,\Z_2) \cong \Z_2},
\end{equation} matches exactly the differential~\eqref{eq:triviald380} in the $K(\Z,3)$ case, which is trivial.
The differential
\begin{equation}
\left( E^2_{8,1}(T) \right)_{= \tilde H_8(T,\Z_2) \cong \Z^2_2}  \longrightarrow   \left( E^2_{6,2}(T)\right)_{= \tilde H_6(T,\Z_2) \cong \Z^2_2}
\end{equation}
follows exactly the same pattern as~\eqref{eq:80618162}, and hence $E^3_{6,2}(T) \cong \Z_2$.
Now on the $E^3$-page, the differential
\begin{equation}
\left( E^3_{9,0}(T) \right)_{\cong \Z \oplus \Z_2}  \longrightarrow   \left( E^3_{6,2}(T)\right)_{ \cong \Z_2}
\end{equation}
is surjective and the $\Z_2$-summand of $E^3_{9,0}(T)$ is mapped isomorphically to $E^3_{6,2}(T)$, same as in~\eqref{eq:isod390}.
We conclude that $E^\infty_{8,0} \cong \Z_2$ and $E^\infty_{9,0} \cong \Z$, such that \begin{equation}
    \boxed{\tilde \Omega^{\rm Spin}_8(T) \cong   \Z_2}\qquad \text{and} \qquad
    \boxed{ \text{Free} \, \tilde \Omega^{\rm Spin}_9(T) \cong   \Z}\,.
\end{equation}
\begin{equation}
    \boxed{(I_{\Z} \tilde \Omega^{\rm Spin})^{9}(K(\Z,3) \times K(\Z,6)) \cong \Z \,\iota \zeta \oplus \Z_2 \, u\,\mathrm{Sq}^2 u }  \,.
\end{equation}

\providecommand{\href}[2]{#2}\begingroup\raggedright\endgroup

\end{document}